\documentclass[journal]{IEEEtran}
\usepackage{amsmath,nccmath}
\usepackage{amssymb}
\usepackage{graphicx}
\usepackage{esint}
\raggedbottom
\usepackage{amsmath,amssymb,amsthm}
\usepackage{thmtools,thm-restate}
\usepackage{hyperref}
\usepackage{cleveref}
\usepackage{ifpdf}
\usepackage{cite}
\usepackage{color}
\usepackage{placeins}
\usepackage{float}
\usepackage{tabularx}
\usepackage{colortbl}
\usepackage{pgfplots}
\usepackage{tikz}
\usepackage{tikzscale}
\pgfplotsset{compat=newest}
\usetikzlibrary{plotmarks}
\usetikzlibrary{arrows.meta}
\usepgfplotslibrary{patchplots}
\usepackage{grffile}
\pgfplotsset{plot coordinates/math parser=false}
\newlength\figureheight
\newlength\figurewidth
\usepackage{pgfgantt}
\usepackage{pdflscape}
\begin{document}

\title{Defense Against Smart Invaders with Swarms of Sweeping Agents}

\author{Roee M. Francos
        and Alfred M. Bruckstein
\thanks{Roee M. Francos and Alfred M. Bruckstein are with the Faculty
of Computer Science, Technion- Israel Institute of Technology, Haifa, Israel, 32000, Emails:(roee.francos@cs.technion.ac.il,
  alfred.bruckstein@cs.technion.ac.il).}
}

\maketitle
\begin{abstract}
The goal of this research is to devise guaranteed defense policies that allow to protect a given region from the entrance of smart mobile invaders by detecting them using a team of defending agents equipped with identical line sensors. By designing cooperative defense strategies that ensure all invaders are detected, conditions on the defenders' speed are derived. Successful accomplishment of the defense task implies invaders with a known limit on their speed cannot slip past the defenders and enter the guarded region undetected. The desired outcome of the defense protocols is to defend the area and additionally to expand it as much as possible. Expansion becomes possible if the defenders' speed exceeds a critical speed that is necessary to only defend the initial region. We present results on the total search time, critical speeds and maximal expansion possible for two types of novel pincer-movement defense processes, circular and spiral, for any even number of defenders. The proposed spiral process allows to detect invaders at nearly the lowest theoretically optimal speed, and if this speed is exceeded, it also allows to expand the protected region almost to the maximal area.
\end{abstract}

\begin{IEEEkeywords}
Multiple Mobile Robot Systems, Motion and Path Planning for Multi Agent Systems, Aerial Robots, Teamwork Analysis, Robot Surveillance and Security
\end{IEEEkeywords}

\IEEEpeerreviewmaketitle

\section{Introduction}
\label{sec:introduction}
\IEEEPARstart{T}{he} objective of this paper is to develop efficient guaranteed defense search strategies in which a swarm of $n$ defending agents must guarantee the detection of an unknown number of smart invaders from entering a region which
the defenders guard.  An initially given circular region of radius $R_0$ is assumed not to contain mobile invaders at the beginning of the sweep protocol, and is referred to as the initial protected region. The invaders may attempt to move into the  protected region from any point outside of the initial protected region and try to enter  the protected region (the region where invaders are not located) at a maximal speed of $V_T$, known to the
defenders. All defenders move at a speed $V_s > V_T$ and detect the invaders with linear sensors of length $2r$.  Once a defender's sensor touches a particular location, potential invaders that might have been present there are
detected and therefore are ``eliminated". Each guaranteed defense strategy requires a minimal speed that depends on the trajectory of the sweeping defenders and imposes a lower bound on the speed of the defenders. This critical speed is derived to ensure the satisfaction of the guarding task. Increasing the speed above the lower bound enables the defending agents to not only complete the guarding task but also to expand the guarded region as well.

Performing an efficient defense protocol requires that the footprint of the defenders' sensors minimally overlaps the protected region, thus allowing them to detect invaders further away from the protected region and stop their advance. By sweeping around the initial region they are required to guard, defenders with speeds higher than the critical speed, can increase the protected region, up to a circular region with a maximal radius that depends on their excess speed, their sensing capabilities and the sweep strategy they employ. This research paper develops two guaranteed defense search protocols for a swarm consisting of an even number of defenders that sweep the region. There are two goals for each developed defense strategy, defending the initial protected region in the defense task and performing the maximal expansion task in which the defenders execute their defense strategy until the protected region reaches its maximal defendable area, by employing novel pincer-movement search strategies. The proposed defense protocols are based on pairs of defenders that move toward each other thus entrapping invaders and halting their advance into the protected region.
\subsection{Overview of Related Research}
Multi-agent search problems have been an active area of research for almost a century, where early works by \cite{koopman1980search} focused on designing algorithms for detecting ships and submarines from surveillance aircraft in the English channel during the second world war.  Multi-agent search tasks involve searching for static or mobile targets and can take place in environments that range from being fully or partially known to being completely unknown \cite{stone2016optimal,rekleitis2004limited,alpern2006theory}. 

In case the targets being searched are static, searching the entire area in which the targets are located will surely result in their detection. Therefore, in such scenarios, the goal in designing an optimal searching algorithm is to find a traversal path for the searching agents that locates all targets in minimal time. In case the targets are mobile, detection is not always guaranteed since the targets' movements might prevent the searchers to detect them. This situation can occur even in closed and confined environments, in which the targets cannot exit the borders of the area being searched. In this paper we address the detection of a more challenging type of target, a smart mobile target that may perform evasive maneuvers by detecting and responding to the movements of the defending team in order to avoid being intercepted by the defenders that wish to prevent it from entering the protected region. Smart targets, which in the context of this paper are referred to as invaders, are assumed to have full knowledge of the defenders' strategy and to use that knowledge, to the best of their ability, in order to devise a counter strategy that allows them to enter the protected region without being caught.
In this work we are interested in developing guaranteed defense strategies against smart invaders, implying that regardless of the infiltration plan the invaders choose and their resulting trajectories, they will all be detected by the defending team. In \cite{vincent2004framework}, Vincent et. al investigate guaranteed detection of smart targets in a channel environment using a team of detecting sweeping agents and \cite{altshuler2008efficient} provides optimal strategies to the same problem.

In \cite{mcgee2006guaranteed}, McGee et al. study how to defend a given planar circular region against the entrance of smart intruders. The intruders do not have any maneuverability restrictions besides an upper limit on their speed. The defenders are equipped with sensors that detect intruders that are inside a disk shaped region around them. The considered search pattern is composed of spiral and linear sections. 
 
Somewhat related problems are pursuit-evasion games, in which the pursuers' goal is to detect and catch the evaders and the evaders goal is to avoid being detected and caught by the pursuing team. There are several variants of pursuit-evasion games which include different combinations of single and multiple evaders and pursuers settings. Pursuit-evasion games were also applied to address defending a region from the entrance of intruders. Such works are \cite{shishika2018local,shishika2019team,shishika2020cooperative} by Shishika et al. which investigate perimeter defense games and emphasize the cooperation between pursuers to improve the defense tactic. In \cite{shishika2018local}, members of the defending team of agents cooperate and form  defender pairs by moving in a "pincer movement"  to prevent intruders from entering a convex region in the plane. Cooperation between the defender sub-teams, allows to extend the winning regions of the defender team compared to performing uncooperative defense strategies. 

In \cite{makkapati2019optimal}, pursuit–evasion problems involving multiple pursuers and multiple evaders (MPME) are studied. Pursuers and evaders are all assumed to be identical, and pursuers follow either a constant bearing or a pure pursuit strategy. The problem is simplified by adopting a dynamic divide and conquer approach, where at every time instant each evader is assigned to a set of pursuers based on the instantaneous positions of all the players. The original MPME problem is decomposed to a sequence of simpler multiple pursuers single evader (MPSE) problems by testing whether a pursuer is relevant or redundant against each evader, by using Apollonius circles. Then, only the relevant pursuers participate in the MPSE pursuit of each evader. 
Recent surveys on pursuit evasion problems are \cite{chung2011search,kumkov2017zero,weintraub2020introduction}.

In \cite{chung2011search}, a taxonomy of search problems is presented. The paper highlights algorithms and results arising from different assumptions on searchers, evaders and environments and discusses potential field applications for these approaches. The authors focus on a number of pursuit-evasion games that are directly connected to robotics and not on differential games which are the focus of the other cited surveys. The paper concentrates on adversarial pursuit-evasion games on graphs and in polygonal environments where the objective is to maximize the worst-case performance on the search or capture time and on probabilistic search scenarios where the objective is the optimization of the expected value of the search objective, such as the maximal probability of detection or minimal capture time. In \cite{kumkov2017zero}, a survey on pursuit problems with $1$ pursuer versus $2$ evaders or $2$ pursuers versus $1$ evader are formulated as a dynamic game and solved with general methods of zero-sum differential games.
In \cite{weintraub2020introduction}, the authors present a recent survey on pursuit-evasion differential games and classify the papers according to the numbers of participating players: single-pursuer single-evader (SPSE), MPSE, one- pursuer multiple-evaders (SPME) and MPME.

In \cite{garcia2019optimal}, a two-player differential game in which a pursuer aims to capture an evader before it escapes a circular region is investigated. The state space, comprised of pursuer and evader locations, is divided into evader and defender winning regions. In each region the players try to execute their optimal strategies.
The players’ strategy depends on the state of the system (if it is in the capture or escape regions), and the proposed approach guarantees that if the players execute the prescribed optimal moves they improve their chances to win. The players move at a constant speed and the pursuer is faster than the evader. The players’ controls are the instantaneous heading angles. The game is a two-termination set differential game, i.e., the game ends when either player wins. In \cite{garcia2020multiple}, the problem of a border defense differential game where $M$ pursuers cooperate in order to optimally catch $N$ evaders before they reach the border of the region and escape is investigated. The members of the pursuer team exchange information between the team members and decide on the discrete assignment of pursers to evaders in an on-line manner. Furthermore, the game is a perfect information differential game where both pursuers and evaders have access to all state variables, which are the locations of all players, as well all their dynamics and velocities. The pursuers in this setting are assumed to have greater speeds than the evaders. The game takes place in a simple half-plane environment, and ends when all evaders are either caught or reach the border and escape.

Devising multi-robot perimeter patrol policies for adversarial settings in which an opponent has complete knowledge of the robots' patrol strategy are developed in \cite{agmon2008multi,agmon2011multi}. Possessing information about the robots' patrol policy enables the smart opponent to attempt to enter the perimeter undetected at the location with the highest probability of success. In order to prevent the opponent to utilize its knowledge on the strategy of the robot team, randomness is introduced into the robots' perimeter patrol algorithm, thus preventing the opponent from having full knowledge of the chosen patrol strategy and consequently increasing the chances to detect it.

\subsection{Contributions}

In this paper, we provide several theoretical and experimental contributions to multi-agent search and coordinated motion planning literature. In the considered scenario, a defending team of agents has to protect an initial region from the entrance of an unknown number of smart invaders, that have superior sensing and planning capabilities compared to the defender team. An analysis on the defenders' trajectories and critical speeds that enable the successful completion of the defense task is provided. Additionally, if possible, an additional goal for the defender team is to optimally expand the region which they guard to the maximal allowable size that still enables the defender team to detect any number of smart invaders that may attempt to enter the protected region. Hence, the total search times and the maximal attainable protected area are also reported.  Extensive theoretical and numerical analysis is performed for both the defense and maximal expansion tasks.
\begin{itemize}
        \item We propose two types of novel guarding and expansion strategies for any for even number of defenders:
        \begin{itemize}
        \item  Circular defense pincer sweep strategy 
        \item  Spiral defense pincer sweep strategy
        \end{itemize}
        \item Based on geometric and dynamic constraints we establish the necessary critical speed for each defense protocol to be successful and derive analytical expressions for the search times and radius of maximal expansion for the two types of search patterns. 
        \item We show that the spiral defense pincer sweep strategy enables defenders that sweep with speeds that are only slightly above the theoretical lower bound to ensure all invaders attempting to enter the protected region are detected. 
        \item We compare between the circular and spiral defense pincer sweep expansion strategies and highlight the advantages of the spiral strategy in both enabling the expansion of the region to a larger size and the search time required to reach it.
        \item We provide a quantitative comparison between the developed pincer-based defense protocols and the corresponding same-direction sweep protocols that are regarded as the state-of-the-art in defense against smart invaders. We prove that the corresponding pincer-based protocols yield lower critical speeds, shorter time to increase the protected region to its maximal size as well as the ability to expand the protected region to a larger area compared to same-direction protocols.
        \item We demonstrate through empirical simulations conducted with MATLAB and NetLogo the theoretical results and present the evolution of the guarding and expansion strategies graphically.
\end{itemize}

\subsection{Paper Organization}
This article is organized as follows. Section \ref{sec:universal_bound} presents an optimal bound on the speed of defenders employing the guarding task. This lower bound is independent of the actual implemented defense protocol. The obtained lower bound is used as a one of the comparison metrics for evaluation the performance of different guard/defense protocols. Section \ref{sec:circular_defense} provides an analysis of the defense and maximal expansion problems when the defender team performs the circular defense pincer sweep process. Section \ref{sec:spiral_defense} presents an analysis of the defense and maximal expansion problems when the defender team performs the spiral defense pincer sweep process. Section \ref{sec:comparative_analysis} provides a comparison between the circular and spiral defense pincer defense strategies and highlights the advantages of using the proposed spiral defense pincer protocol. Section \ref{sec:comparison_to_same_direction_protocols} compares prevalent approaches for defense against smart invaders which are considered the state-of-the-art in defense against smart invaders and compares these approaches to the pincer-based defense protocols developed in this work, proving the superiority of pincer-based approaches. The last section draws conclusions from the performed analysis and provides some interesting future research directions.

\section{Pincer-Based Defense}
\label{sec:pincer_based_defense}
This research focuses on developing a guaranteed defense protocol of an initial region from the entrance of an unknown number of smart invaders. The region is protected by employing a multi-agent team of identical cooperating defenders that sweep around the protected region and detect invaders that attempt to enter it. The defenders possess a linear sensor of length $2r$ with which they detect invaders that intersect their field-of-view. The only information the defenders have is that invaders may be located at any point outside of an initial circular region of radius $R_0$, referred to as the initial protected region at the beginning of the defense process. 

The proposed defense strategies are deterministic and pre-planned, and therefore they can be accomplished by using simple agents-like defenders. All defenders move with a speed is $V_s$ which is measured at the center of a defender's sensor. The invaders move at a maximal speed of $V_T$, and do not have any maneuverability restrictions. There are two objectives for a defense strategy, defending the initial protected region and, if possible, expanding the protected region by performing an iterative expansion strategy until the region reaches the maximal defendable area.

The time it takes the defender team to expand the protected region to the maximal defendable size obviously depends on the applied defense protocol. Two types of defense strategies are investigated, circular and spiral. When defenders perform the maximal expansion task, their goal is to iteratively increase the radius of the protected region after each sweep, up to the maximal defendable size of the region. At the beginning of the circular defense pincer sweep process only half the length of the defenders' sensors is outside of the protected region, i.e. a footprint of length $r$, while the other half is inside the region in order to catch invaders that may move inside the region while the search progresses. At the beginning of the spiral defense pincer sweep process the entire length of the defenders' sensors is outside the protected region, i.e. a footprint of length $2r$. 

The basic idea in performing pincer-based defense is requiring defenders to search for invaders in opposite directions instead of equally distributing the defenders around the region and letting them search in the same direction.  Defenders performing pincer movements address the worst-case scenario of invader entrance to the protected region from the ``most dangerous points", points situated at the edges of their sensors nearest to the protected region's boundary. Invaders located at these points have the maximal amount of time to advance towards the protected region during defender movement and thus if invaders that attempt to infiltrate the region from these points are detected, invaders trying to enter from other points are detected as well. For an extensive discussion about the comparison between pincer-based sweeps and same-direction sweeps in search tasks against smart evaders see \cite{francos2021swarms,francos2021pincer}.

Successfully completing the defense and maximal expansion tasks with the lowest possible critical speed is one of the performance criteria for an efficient defense strategy. Pincer-based defense procedures result in lower critical speeds compared to their same-direction counterparts, and hence are chosen in the developed defense protocols. The discussed pincer-based strategies can be performed with any even number of defenders. The basic idea of pincer-based defense is to decompose the defender team into pairs that are placed back-to-back at the start of the protocol. Each defender in a pair moves in an opposite direction, counter-clockwise or clockwise. When two defenders meet at a location after the completion of a sweep, implying that their sensors are again back-to-back, they switch their movement direction.  The direction switches are performed every time a defender meets the defender scanning the adjacent angular section to its section. Based on the numbers of defenders performing the defense task, the protected region is portioned into equal angular sectors, where each sector is searched by a different defender. The discussed defense protocols can be applied to both $2$ dimensional defense tasks on the plane or in $3$ dimensional defense tasks where the defenders and invaders are drone-like agents that fly over the protected region.

\begin{figure}[!htb]
\noindent \centering{}\includegraphics[width=1.55in,height =3.3in]{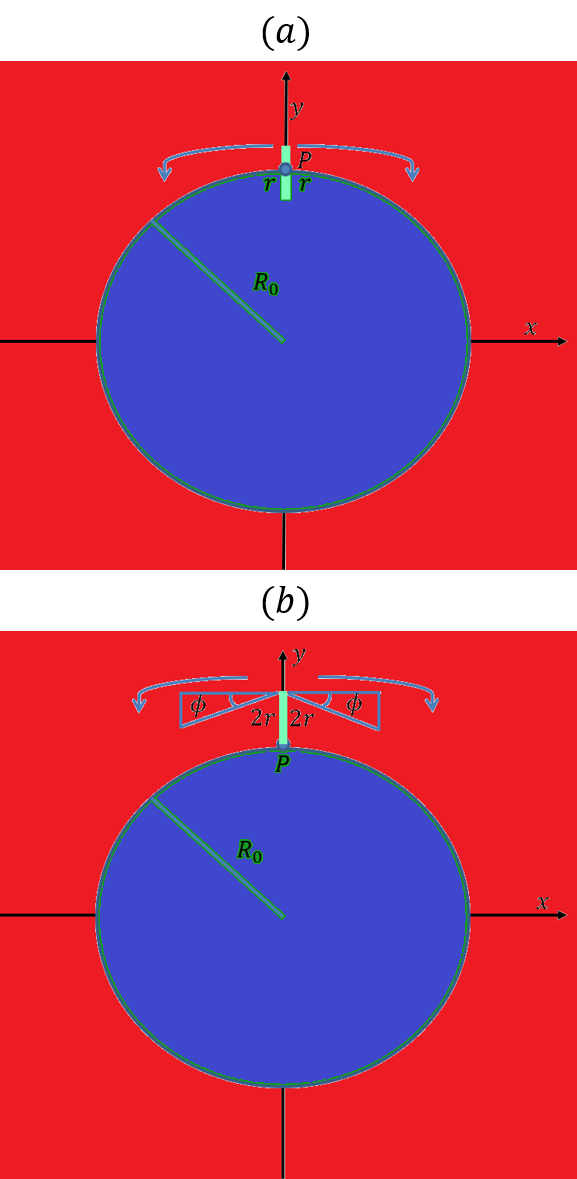} \caption{(a) - Initial placement of $2$ defenders performing the circular defense pincer sweep process.  (b) - Initial placement of $2$ defenders performing the spiral defense pincer sweep process. Defenders' sensors are shown in green and red areas indicate locations where potential invaders may be present. Blue areas represent locations that belong to the initial protected region that does not contain invaders. The angle $\phi$ is the angle between the tip of a defender's sensor and the normal to of the edge of the protected region. $\phi$ is an angle that depends on the ratio between the defender and invader speeds.}
\label{Fig1Label}
\end{figure}

The first considered defense protocol is the circular defense sweep protocol. The circular defense protocol allows to perform the defense and maximal expansion tasks with defenders with basic motion capabilities however as a byproduct of its simplicity it is far from being optimal. Hence, we propose the spiral defense pincer sweep process that provides an improved protocol that uses spiral scans, drawing inspiration from \cite{mcgee2006guaranteed}. The spiraling-in trajectories of the defenders allow them to track the "wavefront" of the expanding protected region, thus detecting invaders at the furthest possible locations from the invader region. At last, we compare and discuss the obtained results of the two defense strategies. The evaluation metrics for the defense strategies include the minimal defender speed required for successful defense of the initial protected region, the time to expand the protected region to the maximal defendable area as well as the maximal feasible protected region's radius resulting from the defense protocol. All these quantities are expressed as a function of the search parameters $R_0$, $r$, $V_T$ and the number of defenders, $n$.

Illustrative simulations demonstrating the evolution of the defense processes were generated using NetLogo software \cite{tisue2004netlogo} and are presented in Fig. $2$. and Fig. $3$. Green areas show locations that were searched and hence do not contain invaders and red areas indicate locations where potential invaders may be present. Blue areas represent locations that belong to the initial protected region that does not contain invaders. Fig. $2$. shows the cleaning progress during the expansion of the protected region when $6$ defenders perform the circular defense pincer sweep process. Fig. $3$. shows the evolution of the defense process during the expansion of the protected region when $4$ defenders perform the spiral defense pincer sweep process.

\begin{figure}[!htb]
\noindent \centering{}\includegraphics[width=3.5in,height =3.80418in]{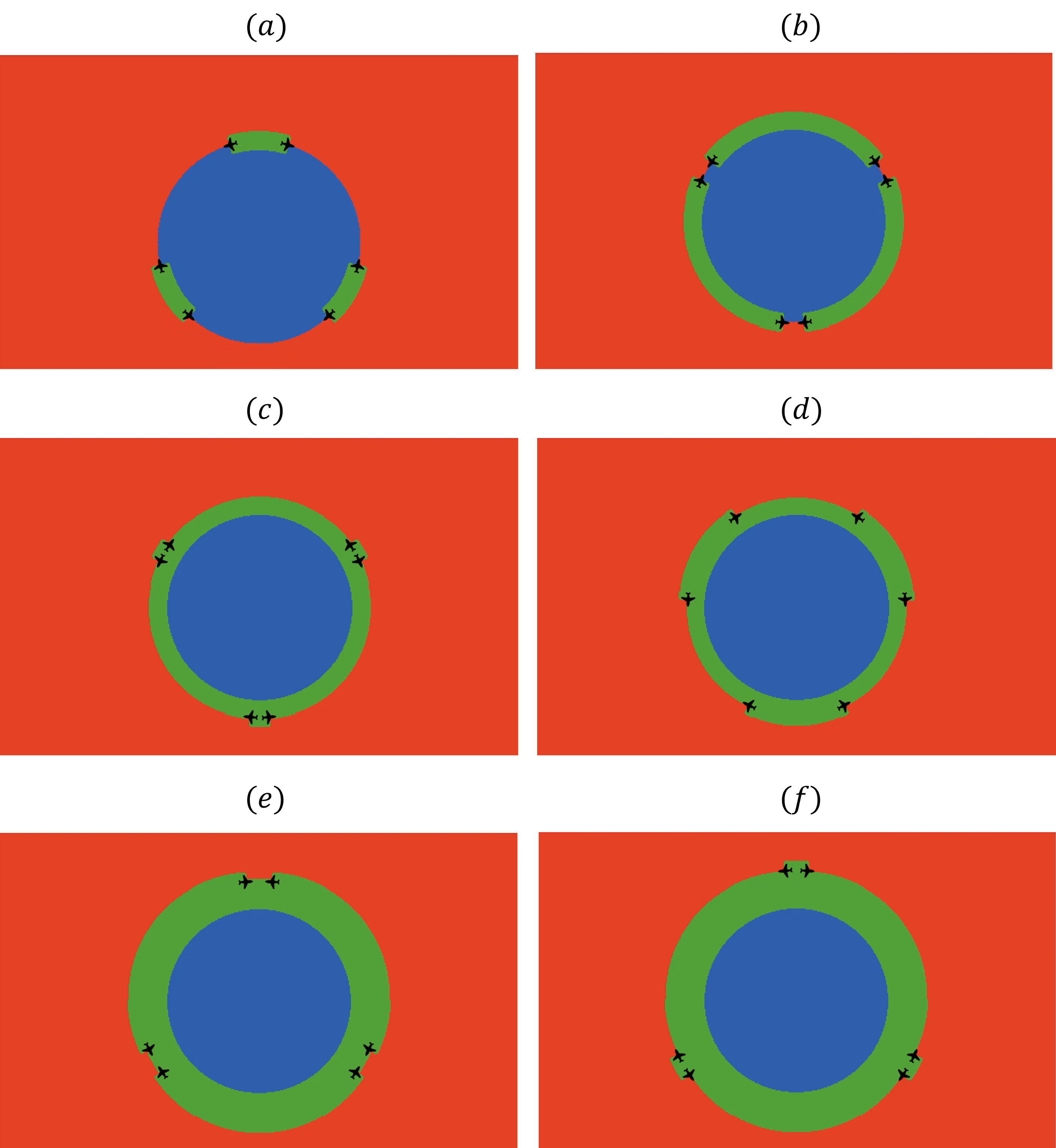} \caption{Swept areas and protected region status for different times in a scenario where $6$ defenders perform the circular defense pincer sweep process. (a) - Beginning of first sweep. (b) - Toward the completion of the first sweep. (c) - Beginning of the second sweep. (d) - Midway of the second sweep. (e) - End of the fourth sweep. (f) - Beginning of fifth sweep. Green areas show locations that were searched and hence do not contain invaders and red areas indicate locations where potential invaders may be present. Blue areas represent locations that belong to the initial protected region that does not contain invaders.}
\label{Fig2Label}
\end{figure}

\begin{figure}[!htb]
\noindent \centering{}\includegraphics[width=3.5in,height =3.85875in]{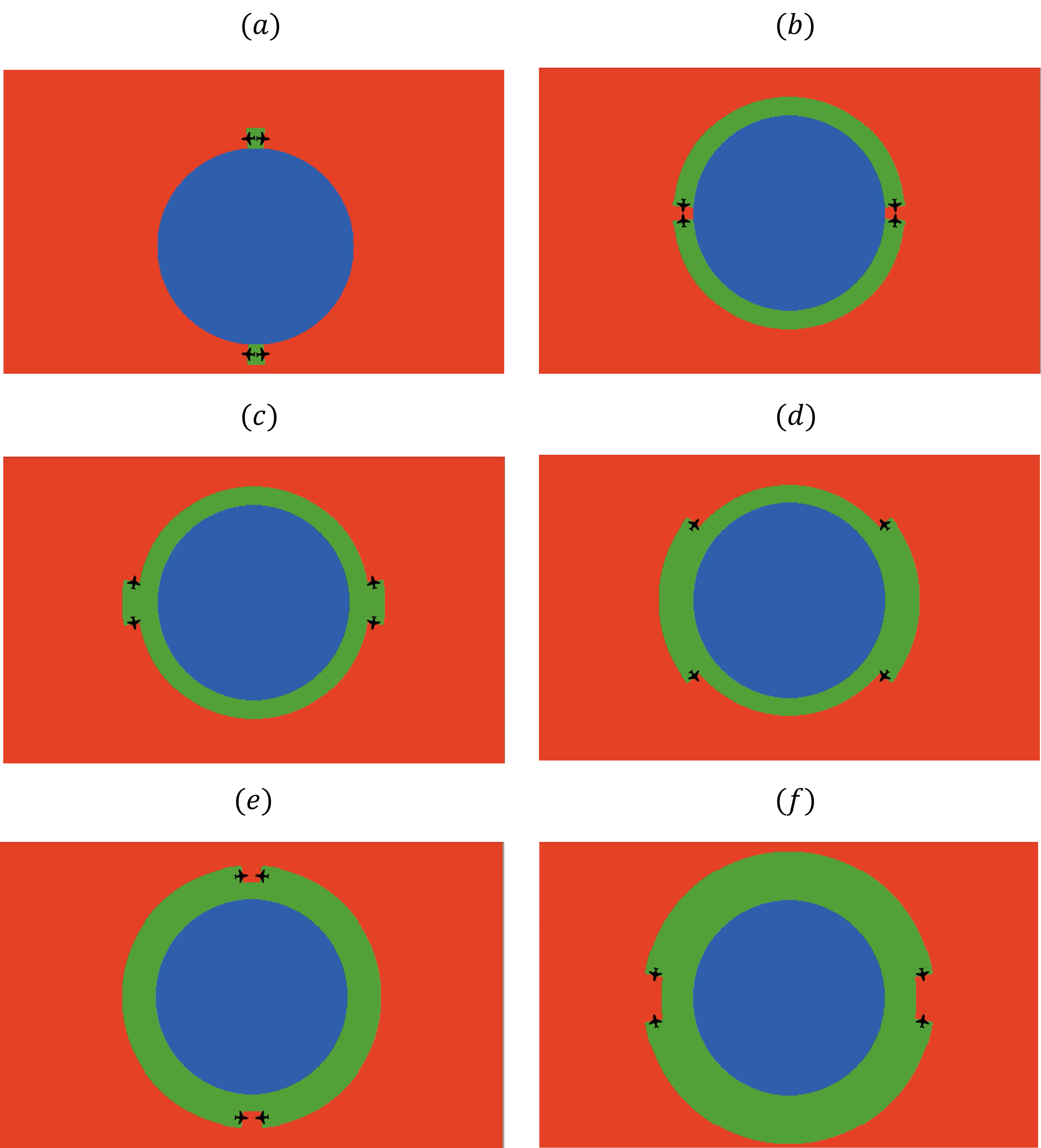} \caption{Swept areas and protected region status for different times in a scenario where $4$ defenders perform the spiral defense pincer sweep process. (a) - Beginning of first sweep. (b) - End of the first sweep. (c) - Beginning of the second sweep. (d) - Midway of the second sweep. (e) - End of the second sweep. (f) - Toward the end of the third sweep. Green areas show locations that were searched and hence do not contain invaders and red areas indicate locations where potential invaders may be present. Blue areas represent locations that belong to the initial protected region that does not contain invaders.}
\label{Fig3Label}
\end{figure}

Note that in the considered problems, the search is continued until the expansion of the protected region reaches the maximal attainable radius, and afterwards the defenders continuously patrol around this radius.   

\subsection{Comparison to Related Research}

In our previous work \cite{francos2019search}, the confinement of an unknown number of smart evaders that are originally located somewhere inside a given circular region to their original domain is investigated. By deploying a line formation of searching agents or alternatively a single agent with an equivalent sized linear sensor that sweep inside and around the region, guaranteed detection protocols are developed. In case the speed of the agents in the line formation exceeds a critical speed, they may decrease the evader region by performing a search protocol consisting of alternating circular sweeps around the region that are followed by inward advancement steps toward the center of the evader region ( the region in which evaders are located). A proof in the paper shows that since the evaders are smart, a search pattern that uses circular sweeps cannot completely clean the evader region, Therefore, after the evader region is reduced by the circular sweeping protocol to a region that is bounded in a circle with a radius equal to half the formation's sensing range, the search pattern must be changed in order to perform a set of end-game maneuvers that allow to guarantee the detection of all smart evaders in the region. 

In \cite{francos2021swarms}, we consider teams of agents that perform pincer sweep search strategies with linear sensors, in order to detect all smart evaders that try to escape from a given region. The paper presents two multi-agent pincer sweep search strategies, circular and spiral, that can be applied with any even number of sweeping agents. The results obtained from the paper show that performing the circular pincer sweep process, where pairs of defenders sweep toward each other allow for lower critical speeds and shorter sweep times until the entire evader region is searched and cleared from evaders compared to a circular sweep process in which the defenders are equally distributed around the evader region and all sweep in the same direction. A circular sweep process in which the searchers all rotate in the same direction is the extension of \cite{francos2019search} to a scenario where the defenders are distributed equally around the region and perform the circular sweep protocol described in \cite{francos2019search}. Therefore, defenders that rotate in the same direction have to perform an end-game scenario similar to the one described in \cite{francos2019search} in order to completely clear the region from evaders, a set of maneuvers that is unnecessary when using the circular pincer sweep protocol described in \cite{francos2021swarms}. 

The second type of sweep pattern that is developed in \cite{francos2021swarms} is the spiral pincer sweep process that allows to complete the search of the evader region in a significantly shorter time and at lower critical speeds compared to the circular pincer sweep process. The critical speed of the spiral pincer sweep process approaches the theoretical lower bound on the critical speed.  

This work aims to solve the dual problem to the problem investigated in \cite{francos2019search} and \cite{francos2021swarms}, which is to protect a given initial region that does not contain invaders from their entrance, and if possible to expand the protected region to the maximal defendable size possible. The first task this work is concerned with, the guarding task is analogous to the confinement task in \cite{francos2019search,francos2021swarms}, in the sense that it aims to keep the protected region's radius constant after the defenders complete a full sweep around the region. The maximal expansion task, in which after each full sweep around the protected region, its radius increases is the dual problem to the constriction of the evader region in \cite{francos2019search,francos2021swarms}. 

In contrast to the algorithms described in \cite{francos2019search,francos2021swarms} the defense process does not terminate when the defenders expand the protected region to its maximal size and they must continuously sweep around the region to keep the intruders that are outside of the protected region from entering it. Additionally, this work presents for each developed pincer sweep defense process the maximal radius that the protected region can be extended into and presents an analysis on the trade-off between approaching the maximal protected region's radius and the sweep time it takes to reach it. Alternatively to the barrier placement problem in \cite{mcgee2006guaranteed}, our approach emphasizes the usage of cooperation between the defenders by using pincer sweeps, calculates the maximal defendable region's size and presents analytical solutions to all aspects of the defense and maximal expansion problems against smart invaders.

The research conducted in \cite{shishika2018local,shishika2019team,shishika2020cooperative} also investigates protecting a region from the entrance of invaders and uses pincer movements between pairs of defenders as well. However, the goal of the defender team in these works is to intercept the largest possible number of invaders contrary to our approach which develops a defense protocol that ensures detection of all invaders, regardless of their numbers, and consequently sets necessary requirements on the defender team in order to achieve its goal. Furthermore, in \cite{shishika2018local,shishika2019team,shishika2020cooperative}, the invaders' locations are constantly known to the defenders hence this information assists the defender team in planning and coordinating its movements and its allocation of defender pairs. Conversely, our approach does not assume any knowledge on the invaders' locations or their number. 

\section{A Universal Bound On Cleaning Rate}
\label{sec:universal_bound}
\noindent In this section we present an optimal bound on the cleaning rate of a defender with a linear shaped sensor. This bound is independent of the particular defense pattern employed. For each of the proposed defense methods we then compare the resulting cleaning rate to the optimal derived bound in order to compare between different defense protocols. We denote the defender's speed as $V_s$, the sensor length as $2r$, the protected region's initial radius as $R_0$ and the maximal speed of an invading agent as $V_T$. The maximal cleaning rate occurs when the footprint of the sensor outside the protected region is maximal. For a line shaped sensor of length $2r$ this happens when the entire length of the sensor is fully outside the protected region and it moves perpendicular to its orientation. The rate of sweeping when this happens has to be higher than the minimal decrease rate of the protected region (given its total area) otherwise no sweeping process can ensure detection of all invaders. We analyze the defense process when the defender swarm comprises $n$ identical agents. The smallest defender's speed satisfying this requirement is defined as the critical speed and denoted by $V_{LB}$, we have:
\newtheorem{thm}{Theorem}
\begin{thm}
No sweeping process is able to successfully complete the defense task if its speed, $V_s$, is less than,
\begin{equation}
{{\rm{V}}_{LB}} = \frac{{\pi {R_0}{V_T}}}{nr}
\label{e1}
\end{equation}
\end{thm}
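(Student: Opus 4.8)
The plan is to establish this bound the way protocol-independent lower bounds of this kind are usually obtained: by a conservation-of-area comparison between the fastest rate at which $n$ linear sensors can clean fresh ground and the slowest rate at which smart invaders can erode the protected region. First I would fix an \emph{arbitrary} set of sensor trajectories with centre speed $V_s$ and introduce the object to track: let $\mathcal{C}(t)$ be the guaranteed-clean region at time $t$, i.e.\ the set of points that no admissible invader path — starting anywhere outside the initial disk $D_0$ of radius $R_0$ and moving at speed at most $V_T$ — could have reached by time $t$ without first being swept by a sensor. By construction $\mathcal{C}(0)=D_0$, and the defense task is accomplished only if $D_0\subseteq\mathcal{C}(t)$ for every $t\ge 0$.

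Next I would bound the instantaneous rate of change of $\operatorname{Area}(\mathcal{C}(t))$ by two competing contributions. The set can only \emph{gain} area through points freshly covered by a sensor, and a single length-$2r$ segment whose centre moves at speed $V_s$ sweeps new ground at rate at most $2rV_s$ — attained exactly in the ``maximal footprint, perpendicular motion'' configuration identified above — so the total gain rate is at most $2nrV_s$. The set \emph{loses} area along $\partial\mathcal{C}(t)$, where invaders lying immediately outside advance at speed $V_T$: every portion of the boundary not shielded by a sensor recedes at normal speed $V_T$, and a sensor that shields a length $\ell$ of boundary does so only by \emph{forgoing} cleaning at rate $V_s\ell$ in order to prevent loss at rate $V_T\ell$ — a net deficit, since $V_s>V_T$ — so aggregating over the $n$ sensors yields the clean estimate
\[
\frac{d}{dt}\operatorname{Area}(\mathcal{C}(t)) \;\le\; 2nrV_s \;-\; V_T\operatorname{Per}(\mathcal{C}(t)).
\]
Because $\mathcal{C}(0)=D_0$ has $\operatorname{Per}(\mathcal{C}(0))=2\pi R_0$ exactly (a disk being the isoperimetric minimiser for its area, so no shape of area $\pi R_0^2$ can erode more slowly), the right-hand side at $t=0$ equals $2nrV_s-2\pi R_0V_T$. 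If $V_s<\pi R_0V_T/(nr)$ this is strictly negative, forcing $\operatorname{Area}(\mathcal{C}(t))<\pi R_0^2$ for all small $t>0$; since $\operatorname{Area}(\mathcal{C}(t)\cap D_0)\le\operatorname{Area}(\mathcal{C}(t))$ the deficit must fall inside $D_0$, i.e.\ some point of the initial protected region admits an admissible undetected invader trajectory, and the defense task fails. Hence $V_s\ge\pi R_0V_T/(nr)=V_{LB}$ is necessary, which is \eqref{e1}.

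The step I expect to be the main obstacle is making the per-sensor cleaning cap and the ``shield versus clean'' accounting rigorous: a segment that also rotates has endpoints faster than its centre and can momentarily sweep area quickly, so one must argue that no \emph{sustained} rate of net new coverage can exceed $2rV_s$ (rotation merely re-covers a bounded disk and produces no new area on average), and one must verify that diverting sensor length to boundary shielding rather than to outward sweeping can only worsen the balance when $V_s>V_T$, so that the comparison $2nrV_s\ge 2\pi R_0V_T$ carries no spurious slack term. A secondary technical point is the measure-theoretic bookkeeping for the set-valued evolution of $\mathcal{C}(t)$, which is only Lipschitz in $t$: the displayed differential inequality should really be phrased with one-sided (Dini) derivatives, or replaced by an integrated bound on $\operatorname{Area}(\mathcal{C}(\tau))$ over a short interval $[0,\tau]$, and one should confirm that an arbitrarily small moment of area deficit inside $D_0$ already defeats the task as defined. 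An equivalent, more hands-on route that exhibits the same scaling is to observe that an invader placed at the outer edge of the annulus a sensor can reach (width at most $2r$) must be re-swept before it crosses that width, i.e.\ within time $2r/V_T$, and that requiring the $n$ defenders to re-sweep the full perimeter $\approx 2\pi R_0$ within this window again forces $V_s\gtrsim\pi R_0V_T/(nr)$.
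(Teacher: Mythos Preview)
Your proposal is correct and follows essentially the same approach as the paper: the paper defers the full proof to \cite{francos2019search} but sketches exactly the area-rate comparison you carry out, namely that the maximal cleaning rate $2nrV_s$ (sensor fully outside, moving perpendicular to its orientation) must dominate the minimal erosion rate $2\pi R_0 V_T$ of the protected disk. Your write-up is in fact more careful than the paper's sketch about the rotation issue, the shield-versus-clean trade-off, and the Dini-derivative formulation, but the core mechanism is identical.
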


For proof see \cite{francos2019search}. The desired outcome it that after the first sweep the protected region is within a circle with a greater radius than the initial protected region's radius. 

\section{Circular Defense Pincer Sweep Process}
\label{sec:circular_defense}

At first, we study a scenario in which a multi-agent team comprising of $n$ agents, referred to as defenders, perform the defense task. The number of defenders, $n$, is even, and all defenders are identical and are equipped with a linear sensor with a length of $2r$. The initial defenders' configuration at the start of the defense protocol is such that each defender has half of its linear sensor outside of the protected region, i.e. a length of $r$. 

\begin{figure}[!htb]
\noindent \centering{}\includegraphics[width=3.5in,height =2.4663in]{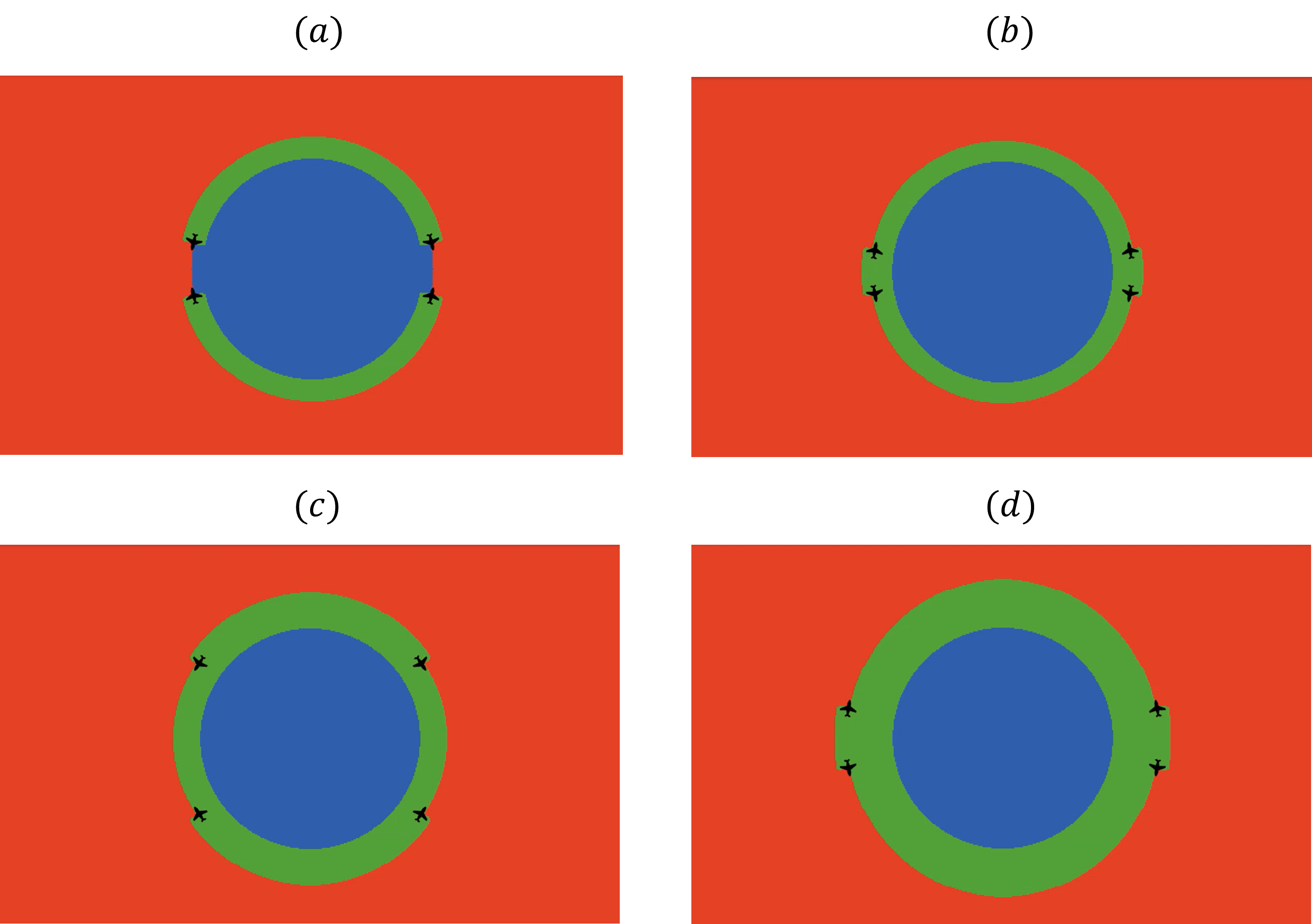} \caption{Swept areas and protected region status for different times in a scenario where $4$ defenders perform the circular defense pincer sweep process. (a) - Beginning of first sweep. (b) - Toward the end of the first sweep. (c) - Beginning of second sweep. (d) - Midway of the third sweep. Green areas show locations that were searched and hence do not contain invaders and red areas indicate locations where potential invaders may be present. Blue areas represent locations that belong to the initial protected region that does not contain invaders.}
\label{Fig4Label}
\end{figure}

\subsection{The Defense Task and Critical Speed Analysis}
Using pincer movements between adjacent pairs of defenders leverages the symmetry between the trajectories of nearby defenders in order to impede the entrance of invaders to the protected region from the most dangerous points invaders can enter from (by using similar arguments to the proof provided in (\cite{francos2019search}). Hence, the defenders' critical speed is computed based on the time required for a defender to scan the sector allocated for it, i.e. an angular section of $\frac{2\pi}{n}$. In case the defenders' speeds exceed the critical speed required for successfully implementing the defense/guarding task, the defenders can advance outwards together from the center of the protected region after completing a sweep. A full sweep or iteration refers to a defender's scan of the sector of the protected region it guards spanning an angle of $\frac{2\pi}{n}$. Hence, the scanned angle each defender guards is a function of the participating defenders in the defense task.  When defenders perform the defense task, they change their scanning direction after the completion of a full sweep. 
\begin{thm}
The circular critical speed equals twice the optimal minimal critical speed,
\begin{equation}
{V_c} = 2V_{LB}
\label{e45}
\end{equation}
\end{thm}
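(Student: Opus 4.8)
The plan is to analyze the worst-case invader entry during a single circular sweep and compute the defender speed needed so that the sweep returns to a configuration no worse than it started. First I would set up the geometry at the start of a sweep: each of the $n$ defenders owns an angular sector of width $\tfrac{2\pi}{n}$, and in the circular protocol only a footprint of length $r$ of each sensor lies outside the protected region, the other half being inside to catch invaders that leak in. Because defenders move in a pincer fashion, adjacent defenders approach each other, so within each sector the relevant worst case is the ``most dangerous point'' at the tip of a sensor nearest the boundary at the meeting seam between two sectors. I would track, in polar coordinates relative to the region center, how far such a worst-case invader can advance radially inward while the defender sweeps its sector.

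Next I would compute the sweep time. Along a circular arc at the current protected radius $R$, the outer tip of the sensor traces a circle of radius $R + r$ (with the inner half inside), but the decisive quantity is the arc length the \emph{center} of the sensor must traverse to cover the sector of angle $\tfrac{2\pi}{n}$; since two defenders share the burden of each seam via the pincer movement, each defender effectively sweeps an arc corresponding to angle $\tfrac{2\pi}{n}$ at its radius, giving a sweep time proportional to $\tfrac{2\pi (R)}{n V_s}$ up to the footprint geometry. During this time the worst-case invader, moving at $V_T$ straight toward the center, closes a radial distance $V_T \cdot t_{\text{sweep}}$. For the region not to shrink, the footprint of length $r$ outside the region must at least compensate this radial loss; equating $r$ (the available outside footprint in the circular case, versus $2r$ in the universal bound's ideal configuration) with $V_T t_{\text{sweep}}$ and solving for the threshold $V_s$ yields a critical speed exactly twice $V_{LB} = \tfrac{\pi R_0 V_T}{n r}$, the factor $2$ coming precisely from the fact that the circular protocol starts each sweep with only half the sensor ($r$ instead of $2r$) outside the region.

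I would then confirm this is both necessary and sufficient: necessary, because a slower defender lets the worst-case invader penetrate faster than the footprint can push the boundary out, so the region strictly shrinks each sweep and the task fails; sufficient, because at speed $2V_{LB}$ the pincer symmetry guarantees that every point of the seam between sectors is reached by one of the two converging defenders before any invader starting outside can cross it, and all other entry points are dominated by the seam-tip case (the standard argument from \cite{francos2019search}). The main obstacle I anticipate is making the ``most dangerous point'' reduction rigorous in the pincer setting --- i.e., verifying that among all invader trajectories and all entry points within a sector, the binding constraint really is the sensor-tip point at the seam, and that the two converging defenders' coverage of that seam is what fixes the timing. Once that reduction is in hand, the speed computation is a short comparison of the footprint length $r$ against $V_T$ times the per-sector sweep time, and the factor-of-two relation to $V_{LB}$ drops out immediately.
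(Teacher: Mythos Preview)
Your proposal is correct and follows essentially the same approach as the paper: compute the sweep time $t_{\text{sweep}}=\tfrac{2\pi R_0}{nV_s}$ for a sector of angle $\tfrac{2\pi}{n}$, require the worst-case invader's radial advance $V_T t_{\text{sweep}}$ not to exceed the inside-footprint length $r$, and solve for $V_s$ to obtain $V_c=\tfrac{2\pi R_0 V_T}{nr}=2V_{LB}$. The paper's proof is terser and, like you, defers the ``most dangerous point'' reduction to the argument in \cite{francos2019search}; your identification of the factor $2$ with the $r$-versus-$2r$ footprint difference is exactly the right intuition.
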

\begin{proof}
Every defender performing the circular defense pincer sweep process has a sensor length of $r$ inside the protected region, to ensure no invader enters the protected region without being detected by the defenders. Hence, in order to catch all invaders, the spread from any potential location inside the invader region (the region outside of the protected region where invaders might be located), has to be restricted to a radius less than $r$ from its origin point at the start of the defense protocol. Therefore, during an angular traversal of $\frac{2\pi}{n}$ around a protected region with a radius of $R_0$, this requirement implies that,
\begin{equation}
\frac{{2\pi {R_0}}}{{n{V_s}}} \le \frac{r}{{{V_T}}}
\label{e22}
\end{equation}
Hence, in order to detect all invaders, the defenders' speed has to exceed,
\begin{equation}
{V_s} \ge \frac{{2\pi {R_0}{V_T}}}{{nr}}
\label{e23}
\end{equation}
The critical speed for the circular defense pincer sweep process is obtained when (\ref{e23}) is satisfied with equality.

\begin{equation}
{V_c} = \frac{{2\pi {R_0}{V_T}}}{{nr}}
\label{e24}
\end{equation}
\end{proof}

The obtained result matches the observation that the circular critical speed for defending a region against the entrance of smart invaders should equal the critical speed required for confining smart evaders (developed in \cite{francos2021swarms}) inside a region having the same size.  
\subsection{The Maximal Expansion Task}
\subsubsection{Number of Sweeps Analysis}
\begin{thm}
The maximal radius that $n$ circularly sweeping defenders, with a linear sensor of length $2r$, a given speed $V_s$ and a maximal invader speed of $V_T$ can safely protect against the entrance of invaders is,
\begin{equation}
{{\bar R}_{N_c}} = \frac{{n{V_s}r}}{{2\pi {V_T}}}   
\label{e1096}
\end{equation}
\end{thm}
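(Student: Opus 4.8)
The plan is to track the radius of the protected region across consecutive sweeps. Write $R_k$ for its radius immediately after the defenders finish their $k$-th full sweep, so $R_0$ is the given initial radius. The strategy is: (i) derive a one-step recursion $R_{k+1}=f(R_k)$ from a worst-case invader argument; (ii) show that the recursion's fixed point is exactly $\bar R_{N_c}$ and that $R_k$ increases toward it; and (iii) show that $\bar R_{N_c}$ is also a hard upper bound, so it is indeed the maximal safely defendable radius.

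\emph{The per-sweep recursion.} At the start of the $k$-th circular pincer sweep each defender's sensor straddles the current boundary (footprint $r$ outside), so its outer tip is at radius $R_k+r$ and its inner tip at $R_k-r$, and each defender clears an angular sector of $\tfrac{2\pi}{n}$; exactly as in the critical-speed computation (\ref{e22})--(\ref{e24}), this sweep lasts $T_k=\tfrac{2\pi R_k}{nV_s}$. By the ``most dangerous point'' argument of \cite{francos2019search}, read with the roles of the inside and outside of the region interchanged: on the one hand, every invader at radius at most $R_k+r$ when the sweep begins moves inward by at most $V_T T_k$, so --- provided $V_T T_k\le r$ --- it stays within the radial reach $[R_k-r,R_k+r]$ of the sensor until the sensor reaches its angular location, and is detected; on the other hand, an invader sitting just beyond the outer tip at the first-cleared angular location slips above the sensor and, running straight inward for the whole sweep, reaches radius $R_k+r-V_T T_k$, so no larger region can be certified clear. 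Hence the protected boundary advances to, and only to,
\begin{equation}
R_{k+1}=R_k+r-V_T T_k=R_k+r-\frac{2\pi V_T}{nV_s}\,R_k .
\label{eProposalRec}
\end{equation}

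\emph{Iterating and bounding.} With $q:=\tfrac{2\pi V_T}{nV_s}$, \eqref{eProposalRec} is the affine map $R_{k+1}=(1-q)R_k+r$, whose unique fixed point is $R^{*}=r/q=\tfrac{nV_sr}{2\pi V_T}=\bar R_{N_c}$; subtracting it gives $R_{k+1}-\bar R_{N_c}=(1-q)(R_k-\bar R_{N_c})$, hence $R_k=\bar R_{N_c}-(1-q)^k(\bar R_{N_c}-R_0)$. The expansion regime is precisely $V_s>V_c$, i.e. $R_0<\bar R_{N_c}$, and in the natural situation $\bar R_{N_c}>r$ (equivalently $0<q<1$) the sequence $(R_k)$ is strictly increasing and converges monotonically to $\bar R_{N_c}$. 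That $\bar R_{N_c}$ cannot be exceeded by \emph{any} circular pincer process follows from the same inequality: a circular sweep of a region of radius $R>\bar R_{N_c}$ has $V_T T(R)>r$, so the worst invader penetrates more than $r$ inward while its sector is being cleared, drops below the inner sensor tip, and enters the protected region undetected --- contradicting the detection guarantee. Thus $\bar R_{N_c}=\tfrac{nV_sr}{2\pi V_T}$ is both attained (in the limit of repeated sweeps) and maximal, which is the claim.

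\emph{Main obstacle.} The algebra in the second part is routine; the substantive step is \eqref{eProposalRec} --- establishing that $R_k+r-V_T T_k$ is \emph{exactly} the radius to which the region can be safely pushed after a sweep at radius $R_k$. This requires verifying that the radial run-in from just beyond the outer tip at the last-cleared angular location is the genuinely worst invader response against a pincer pair, and that the pair does intercept every invader out to the new radius. This is the geometric core of the analysis and is inherited, in dual form, from the confinement results of \cite{francos2019search,francos2021swarms}; once it is in place, the remaining derivation is elementary.
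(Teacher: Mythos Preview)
Your upper-bound argument in part (iii) is correct and is really the essential content of the theorem: the defense constraint $V_T T(R)\le r$ with $T(R)=2\pi R/(nV_s)$ immediately gives $R\le \bar R_{N_c}$, and at $R=\bar R_{N_c}$ the constraint holds with equality so that radius is defendable as a steady state with no outward step required. The paper does not isolate a separate proof of this theorem; the value $\bar R_{N_c}$ emerges there as the fixed point of the recursion derived inside the proof of Theorem~4.

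Where you diverge from the paper is in the recursion itself. The paper's protocol inserts an outward-advancement phase between consecutive sweeps: after sweep $k$ the defenders must physically move their sensor centers outward to meet the incoming invader wavefront, and during that motion invaders continue to run inward. The paper therefore discounts the per-sweep gain by the factor $V_s/(V_s+V_T)$, obtaining
\[
R_{k+1}=R_k+\bigl(r-V_T T_k\bigr)\frac{V_s}{V_s+V_T},
\]
i.e.\ equation~(\ref{e1094}) with contraction factor $c_2=1-\tfrac{2\pi V_T}{n(V_s+V_T)}$ rather than your $1-\tfrac{2\pi V_T}{nV_s}$. Your recursion implicitly assumes instantaneous repositioning and therefore overstates each step's advance. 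This does not affect the present theorem --- both affine maps share the same fixed point $\bar R_{N_c}$, precisely because at the fixed point $r-V_T T=0$ and no advancement is needed --- but if you later want to recover the sweep counts or times of Theorem~4, you will need the corrected recursion. There is also a small indexing slip: you define $R_k$ as the radius ``immediately after sweep $k$'' but then use it as the sensor-center radius at the \emph{start} of sweep $k$; choose one convention.
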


\begin{thm}
For a defender team with $n$ defenders, for which $n$ is even, performing the circular defense pincer sweep process, the number of sweeps required for the defender team to expand the protected region to be bounded by a circle with a radius that is $\varepsilon$ close to the maximal boundable radius ${{\bar R}_{N_c}}$ is,
\begin{equation}
{N_n} = \left\lceil {\frac{{\ln \left( {\frac{{ - 2\pi {V_T}\varepsilon }}{{2\pi {R_0}{V_T} - n{V_s}r}}} \right)}}{{\ln \left( {1 - \frac{{2\pi {V_T}}}{{n\left( {{V_s} + {V_T}} \right)}}} \right)}}} \right\rceil 
\label{e98986}
\end{equation}
Denote by ${T_{out}}$ the sum of all outward advancement times and by ${T_{circular}}$ the sum of all circular search times. Hence, the total search time necessary for the defender team to expand the protected region to its maximal size is given by,
\begin{equation}
T(n) = {T_{out}}(n) + {T_{circular}}(n)
\label{e98988}
\end{equation}
${T_{out}}(n)$ is given by,
\begin{equation}
{T_{out}}(n) = \frac{{nr}}{{2\pi {V_T}}} - \frac{{{R_0} + \varepsilon }}{{{V_s}}}
\label{e98989}
\end{equation}
And ${T_{circular}}(n)$ is given by,
\begin{equation}
\begin{array}{l}
{T_{circular}}(n) = \frac{{{R_0}\left( {{V_s} + {V_T}} \right)}}{{{V_s}{V_T}}} - \frac{{rn\left( {{V_s} + {V_T}} \right)}}{{2\pi {V_T}^2}} \\ - {\left( {1 - \frac{{2\pi {V_T}}}{{n\left( {{V_s} + {V_T}} \right)}}} \right)^{{N_n}}}\left( {\frac{{2\pi {R_0}{V_T} - rn{V_s}}}{{{V_s}{V_T}}}} \right)\left( {\frac{{{V_s} + {V_T}}}{{2\pi {V_T}}}} \right) + \frac{{{N_n}r}}{{{V_T}}}
\end{array}
\label{e98990}
\end{equation}
\end{thm}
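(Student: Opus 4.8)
The plan is to reduce the claim to solving one affine recursion and summing a finite geometric series. First I would fix the structure of the maximal-expansion process: it runs $N_n$ cycles indexed $k=0,\dots,N_n-1$, starting from the given radius $R_0$; in cycle $k$ each defender performs a circular sweep of its angular sector $\tfrac{2\pi}{n}$ at protected-region radius $R_k$, taking time $\tfrac{2\pi R_k}{nV_s}$, and then the whole team performs a radial outward advancement at speed $V_s$ that moves the sensor centers from radius $R_k$ to radius $R_{k+1}$, taking time $\tfrac{R_{k+1}-R_k}{V_s}$.

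The heart of the proof is the recursion $R_k\mapsto R_{k+1}$, which I would obtain by the same worst-case-invader argument used for the circular critical speed (Theorem~2), now charging the worst invader the full duration $\tfrac{2\pi R_k}{nV_s}+\tfrac{R_{k+1}-R_k}{V_s}$ of the cycle against the clearance available to it, which has shrunk from $r$ by the size $R_{k+1}-R_k$ of the outward step. Taking this constraint at equality (the most aggressive admissible expansion) gives
\[
R_{k+1}=\alpha R_k+(1-\alpha)\,{\bar R}_{N_c},\qquad \alpha:=1-\frac{2\pi V_T}{n\left(V_s+V_T\right)},
\]
whose unique fixed point is precisely the maximal radius ${\bar R}_{N_c}=\tfrac{nV_sr}{2\pi V_T}$ of \eqref{e1096}. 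In the expansion regime $V_s>V_c$ one has $R_0<{\bar R}_{N_c}$ and $0<\alpha<1$, so $R_k\uparrow{\bar R}_{N_c}$ and the gap closes geometrically, ${\bar R}_{N_c}-R_k=\alpha^{k}\left({\bar R}_{N_c}-R_0\right)$. Imposing ${\bar R}_{N_c}-R_{N_n}=\alpha^{N_n}\left({\bar R}_{N_c}-R_0\right)\le\varepsilon$ and taking logarithms (noting $\ln\alpha<0$) gives $N_n=\big\lceil\ln\!\big(\tfrac{\varepsilon}{{\bar R}_{N_c}-R_0}\big)\big/\ln\alpha\big\rceil$; substituting ${\bar R}_{N_c}=\tfrac{nV_sr}{2\pi V_T}$ rewrites the argument of the logarithm as $\tfrac{-2\pi V_T\varepsilon}{2\pi R_0V_T-nV_sr}$ and $\alpha$ in the stated form, which is \eqref{e98986}.

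It then remains to total the two contributions. The advancement steps are radial and traversed at speed $V_s$, and the final one is stopped exactly at radius ${\bar R}_{N_c}-\varepsilon$, so their combined duration is the net radial displacement over $V_s$, giving $T_{out}(n)=\tfrac{({\bar R}_{N_c}-\varepsilon)-R_0}{V_s}=\tfrac{nr}{2\pi V_T}-\tfrac{R_0+\varepsilon}{V_s}$ by $\tfrac{{\bar R}_{N_c}}{V_s}=\tfrac{nr}{2\pi V_T}$. For the circular part I would sum the sweep times over the $N_n$ cycles, $T_{circular}(n)=\tfrac{2\pi}{nV_s}\sum_{k=0}^{N_n-1}R_k=\tfrac{2\pi}{nV_s}\sum_{k=0}^{N_n-1}\big({\bar R}_{N_c}-\alpha^{k}({\bar R}_{N_c}-R_0)\big)$, evaluate $\sum_{k=0}^{N_n-1}\alpha^{k}=\tfrac{1-\alpha^{N_n}}{1-\alpha}$, and simplify using $\tfrac{2\pi{\bar R}_{N_c}}{nV_s}=\tfrac{r}{V_T}$ and $1-\alpha=\tfrac{2\pi V_T}{n(V_s+V_T)}$ to obtain $T_{circular}(n)=\tfrac{N_nr}{V_T}-\tfrac{({\bar R}_{N_c}-R_0)(V_s+V_T)}{V_sV_T}\big(1-\alpha^{N_n}\big)$; re-expanding ${\bar R}_{N_c}$ and $\alpha$ reproduces the displayed expression, and $T(n)=T_{out}(n)+T_{circular}(n)$ is then just the definition.

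I expect the recursion derivation to be the only genuine obstacle: one must argue that the worst entry point is the tip of a defender's sensor facing away from the region's center, that a smart invader gains nothing by deferring its dash to a later cycle — this is where the symmetry of the pincer pairs and the admissible sensor inclination $\phi$ enter — and that equality in the exposure constraint yields a schedule that is at once safe and of minimal length. Everything downstream of the recursion — closing it, extracting $N_n$ by logarithms, and summing a finite geometric series — is routine and mirrors the corresponding computations in our earlier confinement analyses.
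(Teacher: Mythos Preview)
Your approach is correct and mirrors the paper's: both derive the same affine recursion for $R_k$ from the worst-case invader constraint (the paper phrases it via the raw clearance $\delta_i=r-V_T T_{circular_i}$ and the effective step $\delta_{i_{\mathrm{eff}}}=\delta_i\,\tfrac{V_s}{V_s+V_T}$, which is algebraically identical to your single-cycle exposure inequality), solve it in closed form to read off $N_n$, and sum the resulting geometric series for $T_{circular}$; your telescoping shortcut for $T_{out}$ is in fact cleaner than the paper's explicit summation carried out in Appendix~A. One small slip: the sensor inclination $\phi$ you invoke in the last paragraph belongs to the spiral process and plays no role in the circular sweep, where the sensor is radial throughout---the relevant ingredient there is only the pincer symmetry.
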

\begin{proof}
Denote by $\Delta V >0$ the excess speed of the defender above the critical speed. Hence, the defender's speed is  $V_s = V_c + \Delta V$. The time required for a defender to circularly sweep around its allocated section is,
\begin{equation}
{T_{circular}}_i = \frac{{2\pi {R_i}}}{{n({V_c} + \Delta V)}}
\label{e25}
\end{equation}
Since $V_s = V_c + \Delta V$, ${T_{circular}}_i$ may also be expressed as,
\begin{equation}
{T_{circular}}_i = \frac{{2\pi {R_i}}}{{n{V_s}}}
\label{e20}
\end{equation}
Depending on the number of participating defenders and the iteration number, the distance a defender can advance outwards from the center of the protected region after completing an iteration is,
\begin{equation}
{\delta _i}(\Delta V) = r - {V_T}{T_{circular}}_i \hspace{1mm}, \hspace{1mm} 0  \le {\delta _i}(\Delta V) \le r
\label{e1090}
\end{equation}
In the expression ${\delta _i}(\Delta V)$, $\Delta V$ is the excess speed of a defender with respect to the critical speed. Denote by $i$ the number of full sweeps defenders perform around the protected region, in which the first sweep occurs when $i=0$. After completing a full sweep, the defenders move outwards from the center of the protected region with the inner tips of their sensors pointing to the center of the protected region. At times in which defenders move outwards, they do so with a speed of $V_s$. This motion continues up to the location in which the defenders begin their next sweep once half of their sensor is inside the expanding wavefront of the invading region. During the outward advancements no invaders are detected, while the protected region continues to shrink.

The time it takes defenders to move outwards up to the point where half of their sensors are outside of the protected region depends on the relative speed between the defenders outward motion and the invader region's inwards expansion and is given by (\ref{e1068}). As the defenders gradually head outward from the center of the protected region, the protected region continuous to shrink. Hence, in order for no invader to enter the region, defenders must advance outwards to a lesser extent than ${\delta _i}(\Delta V)$. This quantity is denoted by ${\delta _{{i_{eff}}}}(\Delta V)$, and depends on the ratio between the speed in which defenders move outwards from the center of the protected region and the sum of the defender and invader region spread speeds. ${\delta _{{i_{eff}}}}(\Delta V)$ is the actual distance defenders are allowed to progress outwards after each sweep so that they meet the wavefront of the expanding invader region at the point where half of their sensors are over the invader region. Therefore, the distance a defender may advance outwards after completing a sweep around the protected region is,
\begin{equation}
{\delta _{{i_{eff}}}}(\Delta V) = {\delta _i}(\Delta V)\left( {\frac{{{V_s}}}{{{V_s} + {V_T}}}} \right)
\label{e500}
\end{equation}
The outward advancement time depends on the iteration number. It is denoted by $T_{out{_i}}$ and is expressed as,
\begin{equation}
T_{{out}_i} = \frac{{{\delta _{{i_{eff}}}}(\Delta V)}}{{{V_s}}} = \frac{{rn{V_s} - 2\pi {R_i}{V_T}}}{{n{V_s}\left( {{V_s} + {V_T}} \right)}}
\label{e1068}
\end{equation}
The index $i$ in ${T_{out_i}}$ denotes the iteration number in which the advancement takes place. After the defenders complete their sweep, the protected region is bounded by a circle with a larger radius compared to the previous sweep. Thus, the new radius of the circle bounding the protected region is given by,
\begin{equation}
{R_{i + 1}} = {R_i} + {\delta _{{i_{eff}}}}(\Delta V) = {R_i} + {\delta _i}(\Delta V)\left( {\frac{{{V_s}}}{{{V_s} + {V_T}}}} \right)
\label{e1069}
\end{equation}
Replacing the value of ${\delta _i}(\Delta V)$ from (\ref{e1090}) into (\ref{e1069}) results in,
\begin{equation}
{R_{i + 1}} = {R_i} + {\delta _{{i_{eff}}}}(\Delta V) = {R_i} + \frac{{r{V_s}}}{{{V_s} + {V_T}}} - \frac{{2\pi {R_i}{V_T}}}{{n\left( {{V_s} + {V_T}} \right)}}
\label{e1093}
\end{equation}
Rearranging terms yields,
\begin{equation}
{R_{i + 1}} = {R_i}\left( {1 - \frac{{2\pi {V_T}}}{{n\left( {{V_s} + {V_T}} \right)}}} \right) + \frac{{r{V_s}}}{{{V_s} + {V_T}}}
\label{e1094}
\end{equation}
Denote the coefficients $c_1$ and $c_2$ by,
\begin{equation}
{c_1} =  \frac{{r{V_s}}}{{{V_s} + {V_T}}},{c_2} = 1 - \frac{{2\pi {V_T}}}{{n\left( {{V_s} + {V_T}} \right)}}
\label{e1095}
\end{equation}
Hence, (\ref{e1094}) can be expressed as,
\begin{equation}
{R_{i + 1}} = {c_2}{R_i} + {c_1}
\label{e1300}
\end{equation}

Since the defenders need to move outwards from protecting a region with a smaller radius, and during this outwards motion the protection region continues to shrink, the defenders are able to protect a marginally smaller region. The expansion task progresses by alternating circular sweeps and outward movements until the protected region is bounded by the largest possible circle. Let $\varepsilon>0$, and denote by ${\widehat{R}_{{N_n} - 1}}$ the radius of the protected region that is $\varepsilon$ close to ${{\bar R}_{N_c}}$,
\begin{equation}
\widehat{R}{_{{N_n} - 1}} = R_{max} = \frac{{n{V_s}r}}{{2\pi {V_T}}} - \varepsilon 
\label{e1091}
\end{equation}
The number of iterations required for the defender team to expand the protected region to be bounded by a circle with a radius of $\widehat{R}{_{{N_n} - 1}}$ that is $\varepsilon$ close to ${{\bar R}_{N_c}}$ is calculated by similar steps as the calculation in Appendix $A$ of \cite{francos2021swarms}. It is given by,
\begin{equation}
N_n = \left\lceil 
\frac{1}{\ln {c_2}}\ln \left( {\frac{{\widehat{R}{_{{N_n} - 1}} - \frac{{{c_1}}}{{1 - {c_2}}}}}{{{R_0} - \frac{{{c_1}}}{{1 - {c_2}}}}}} \right)
\right\rceil
\label{e11}
\end{equation}
The radius $R_{max}= \widehat{R}{_{{N_n} - 1}}$ is the maximal radius the protected region expands to, and is used to calculate the number of sweeps required to reach this radius. The actual radius that bounds the protected region after $N_n$ sweeps is denoted by ${R}{_{{N_n} - 1}}$ and is computed after $N_n$ is calculated. Replacing the coefficients in (\ref{e11}) yields that the number of sweeps required for the defenders to increase the protected region to be in a circle with the radius of the last sweep around the region, ${R_{{N_n} - 1}}$, is,
\begin{equation}
{N_n} = \left\lceil {\frac{{\ln \left( {\frac{{ - 2\pi {V_T}\varepsilon }}{{2\pi {R_0}{V_T} - n{V_s}r}}} \right)}}{{\ln \left( {1 - \frac{{2\pi {V_T}}}{{n\left( {{V_s} + {V_T}} \right)}}} \right)}}} \right\rceil   
\label{e1019}
\end{equation}
To determine the number of required sweeps, the ceiling operator is used in order to implicitly demand that the number of iterations is an integer number, thus causing defenders to complete their sweep cycle and meet the defender that searches the adjacent section. This leads defenders to finish sweep $N_n-1$, even when the protected region's radius is somewhat larger than ${\widehat{R}_{{N_n} - 1}}$. 

After completing the last circular sweep, defenders perform the last outward advancement, and defenders continue to circularly patrol around a protected region with a radius of $R_{max}$ indefinitely. The total search time required for a defender team of $n$ defenders to enlarge the protected region to its maximal size is obtained by combining the total outward advancement times together with the total circular sweep times around the protected region in all iterations. Denote by ${T_{out}}(n)$ the sum of all the outward advancement times and by ${T_{circular}}(n)$ the sum of all circular sweep times. Hence,
\begin{equation}
T(n) = {T_{out}}(n) + {T_{circular}}(n)
\label{e1101}
\end{equation}
\subsubsection{Outward Advancement Times Calculation}
Denote the total outward advancement times until the protected region is within a circle of radius ${R_{N - 1}}$ as $\widetilde{T}_{out}(n)$. This time is calculated by,
\begin{equation}
\widetilde{T}_{out}(n) = \sum\limits_{i = 0}^{{N_n} - 2} {{T_{ou{t_i}}}  }
\label{e1099}
\end{equation}
Throughout the outward advancements phases the defenders do not perform sweeping and detection of invaders and hence invaders are not detected until the defenders finish their outwards motion and resume the sweeping of the protected region. Following a defender's completion of the outwards progression phase, its sensor overlaps both the invader region and the protected region by $r$.

Substituting ${T_{ou{t_i}}}$ in (\ref{e1099}) yields that the accumulation of outward advancement times prior to the protected region being within a circle of radius ${R_{N - 1}}$ can be calculated as follows,
\begin{equation}
\widetilde{T}_{out}(n)= \sum\limits_{i = 0}^{{N_n} - 2} {{T_{ou{t_i}}} = } \frac{{\left( {{N_n} - 1} \right)r}}{{{V_s} + {V_T}}} - \frac{{2\pi {V_T}\sum\limits_{i = 0}^{{N_n} - 2} {{R_i}} }}{{n{V_s}\left( {{V_s} + {V_T}} \right)}}
\label{e1020}
\end{equation}
The first outward advancement takes place after the protected region is within a circle of radius $R_0$ and the final outward advancement occurs after iteration number ${N_n}-2$, causing the protected region to expand from being inside a circle of radius ${R_{{N_n} - 2}}$ to being within a circle of radius ${R_{{N_n} - 1}}$. Following this motion, the defender team circularly sweeps around the protected region with a radius of ${R_{{N_n} - 1}}$. ${R_{{N_n} - 1}}$ that is calculated using the recursive relation in (\ref{e1300}) and is given by,
\begin{equation}
{R_{{N_n} - 1}} = \frac{{{c_1}}}{{1 - {c_2}}} + {c_2}^{{N_n} - 1}\left( {{R_0} - \frac{{{c_1}}}{{1 - {c_2}}}} \right)
\label{e505}
\end{equation}
Substitution of coefficients in (\ref{e505}) yields,
\begin{equation}
\begin{array}{l}
{R_{{N_n} - 1}} = \frac{{n{V_s}r}}{{2\pi {V_T}}} + {\left( {1 - \frac{{2\pi {V_T}}}{{n\left( {{V_s} + {V_T}} \right)}}} \right)^{{N_n} - 1}}\left( {\frac{{2\pi {R_0}{V_T} - n{V_s}r}}{{2\pi {V_T}}}} \right)
\label{e508}
\end{array}
\end{equation}
The full derivation of $\widetilde{T}_{out}(n)$ is continued in Appendix $A$. Hence,
\begin{equation}
\begin{array}{l}
\widetilde{T}_{out}(n) = \sum\limits_{i = 0}^{{N_n} - 2} {{T_{ou{t_i}}} }= \\   - \frac{{{R_0}}}{{{V_s}}} + \frac{{nr}}{{2\pi {V_T}}} + {\left( {1 - \frac{{2\pi {V_T}}}{{n\left( {{V_s} + {V_T}} \right)}}} \right)^{{N_n} - 1}}\left( {\frac{{2\pi {R_0}{V_T} - n{V_s}r}}{{2\pi {V_T}{V_s}}}} \right)
\end{array}
\label{e1100}
\end{equation}
Following the last circular sweep the defenders perform the last outward advancement, until reaching $R_{max}$. The time it takes the defenders to perform this last outward advancement is given by,
\begin{equation}
T_{out_{last}} = \frac{\left(\widehat{R}{_{{N_n} - 1}} - {R_{{N_n} - 1}}\right)}{V_s}
\label{e1102}
\end{equation}
After this last outward sweep, the defenders perform circular sweeps around a protected region of radius $R_{max}$ and continuously protect the region from the entrance of invaders after reaching the maximal protected region the defenders can guard.
Summing $\widetilde{T}_{out}(n)$ and the last outward advancement time in (\ref{e1102}) yields,
\begin{equation}
{T_{out}}(n) = \frac{{nr}}{{2\pi {V_T}}} - \frac{{{R_0} + \varepsilon }}{{{V_s}}}
\label{e1105}
\end{equation}

\subsubsection{Circular Sweep Times Calculation}
The time to complete the first circular sweep is ${T_0} = \frac{{2\pi {R_0}}}{{n{V_s}}}$. Similarly, the time it takes to perform a circular motion spanning an angular section of $\frac{2\pi}{n}$ around a circle of radius $R_i$ while moving with a speed of $V_s$ is,
\begin{equation}
{T_i} = \frac{{2\pi {R_i}}}{{n{V_s}}}
\label{e32}
\end{equation}
Denote the coefficient $c_3$ by,
\begin{equation}
{c_3} =  \frac{{2\pi r}}{{n\left( {{V_s} + {V_T}} \right)}}
\label{e30}
\end{equation}
It can be noted that by multiplying (\ref{e1300}) by $\frac{2\pi}{n{V_s}}$ one obtains a recursive difference equation that cab be utilized to calculate the circular sweep times. Therefore the sweep times can be written as,
\begin{equation}
T_{i + 1} = c_2{T_i} + c_3
\label{e12}
\end{equation}
Denote by $T_{circular}(n)$ the total circular sweep times required to expand the protected region to be within a circle having a radius equal to or greater than ${\widehat{R}_{N - 1}}$. The calculation of  $T_{circular}(n)$ follows similar steps as in appendix $C$ of \cite{francos2021swarms}. Hence,
\begin{equation}
T_{circular}(n)  = \frac{{{T_0} - {c_2}{T_{N_n - 1}} + \left( {{N_n} - 1} \right){c_3}}}{{1 - {c_2}}}
\label{e8}
\end{equation}
The calculation of the last circular sweep time prior to the protected region being within a circle having a radius greater than or equal to ${\widehat{R}_{N - 1}}$ follows similar steps as in appendix $D$ of \cite{francos2021swarms}. Hence,
\begin{equation}
{T_{{N_n} - 1}} = \frac{{{c_3}}}{{1 - {c_2}}} +  {{c_2}^{{N_n} - 1}}\left( {{T_0} - \frac{{{c_3}}}{{1 - {c_2}}}} \right)
\label{e10}
\end{equation}
Substitution of coefficients in (\ref{e10}) results in,
\begin{equation}
\begin{array}{l}
{T_{N - 1}} = \frac{r}{{{V_T}}} + {\left( {1 - \frac{{2\pi {V_T}}}{{n\left( {{V_s} + {V_T}} \right)}}} \right)^{{N_n} - 1}}\left( {\frac{{2\pi {R_0}{V_T} - rn{V_s}}}{{n{V_s}{V_T}}}} \right)
\end{array}
\label{e35}
\end{equation}
Therefore, the total circular sweep times from (\ref{e8}) are, 
\begin{equation}
\begin{array}{l}
T_{circular}(n)  =  \frac{{{R_0}\left( {{V_s} + {V_T}} \right)}}{{{V_s}{V_T}}} - \frac{{rn\left( {{V_s} + {V_T}} \right)}}{{2\pi {V_T}^2}} \\ - {\left( {1 - \frac{{2\pi {V_T}}}{{n\left( {{V_s} + {V_T}} \right)}}} \right)^{{N_n}}}\left( {\frac{{2\pi {R_0}{V_T} - rn{V_s}}}{{{V_s}{V_T}}}} \right)\left( {\frac{{{V_s} + {V_T}}}{{2\pi {V_T}}}} \right) + \frac{{{N_n}r}}{{{V_T}}}
\end{array}
\label{e1035}
\end{equation}
Subsequently to the completion of sweep $N_n$ the protected region is within a circle of ${R_{{N_n} - 1}}$. 
\end{proof}
\subsubsection{Numerical Experiments}

Fig. $5$ presents the maximal protected region's radius that the defenders are able to protect. The maximal radius clearly depends on the number of defenders and their speed. Fig. $6$ presents the number of sweeps required to expand the protected region to its maximal size as a function of $\varepsilon$. Fig. $7$ presents the expansion time of the protected region to $R_{N_c} - \varepsilon$ for a fixed speed exceeding the circular critical speed. Fig. $8$ presents the total search times for different numbers of defenders until maximal expansion of the protected region is achieved. In all presented graphs the defenders' speed is equal and is independent of the number of defenders performing the expansion protocols, and is chosen so that search times of defender teams with different number of defenders are correctly compared. The values of $\Delta V$ mentioned in the plots are speeds exceeding the critical speed of two defenders employing the circular defense pincer sweep process, due to the fact that as the number of defenders increases, the critical speed required for successfully performing the defense task decreases. Hence, defender teams with more defenders can achieve their goal of defending the region while moving at speeds exceeding the critical speed of two defenders, while the contrary argument is false. The second plot from the top of Fig. $8$. presents the search time reduction obtained when the number of participating defenders increases. 

\begin{figure}[!htb]
\noindent \centering{}\includegraphics{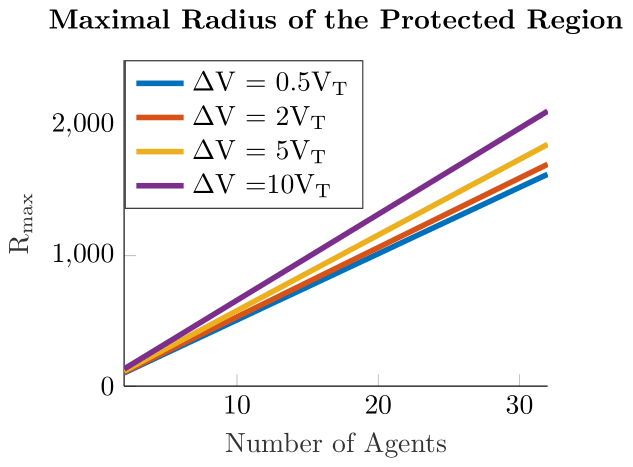} \caption{Maximal protected region's radius. We simulated circular defense sweep processes with an even number of agents, ranging from $2$ to $32$ agents. The chosen values of the parameters are $r=10$, $V_T = 1$ and $\varepsilon = 0.2$.}
\label{Fig5Label}
\end{figure}

\begin{figure}[!htb]
\noindent \centering{}\includegraphics{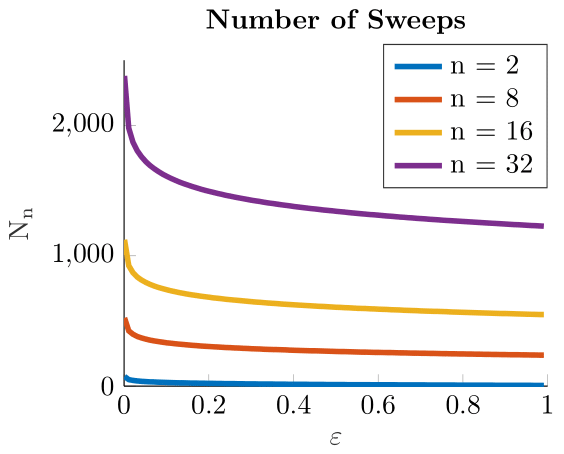} \caption{Number of sweeps required to expand the protected region to its maximal size as a function of $\varepsilon$. We plot the results for defenders performing the circular defense pincer sweep processes with $2,8,16$ and $32$ agents. The chosen values of the parameters are $r=10$, $V_T = 1$, $V_s = 31.9159$.}
\label{Fig6Label}
\end{figure}

\begin{figure}[!htb]
\noindent \centering{}\includegraphics{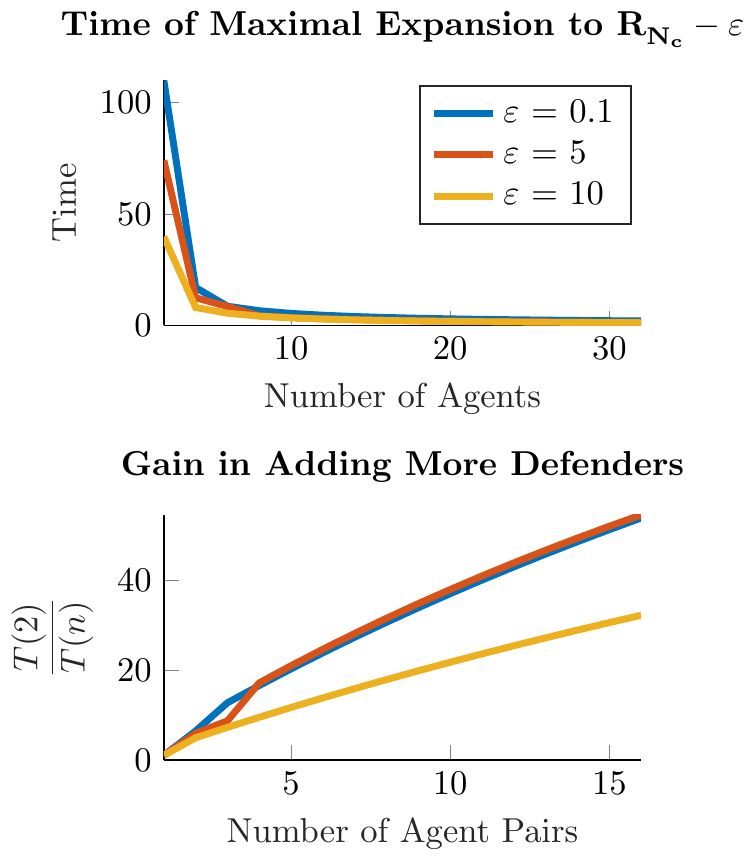} \caption{Time of maximal expansion of the protected region to $R_{N_c} - \varepsilon$ and gain in adding more defenders for equal defender speeds. We simulated the circular defense pincer sweep processes for an even number of agents, ranging from $2$ to $32$ agents. The chosen values of the parameters are $r=10$, $V_T = 1$, $\Delta V = 10$ and $R_{max} = 120$.}
\label{Fig7Label}
\end{figure}

\begin{figure}[!htb]
\noindent \centering{}\includegraphics{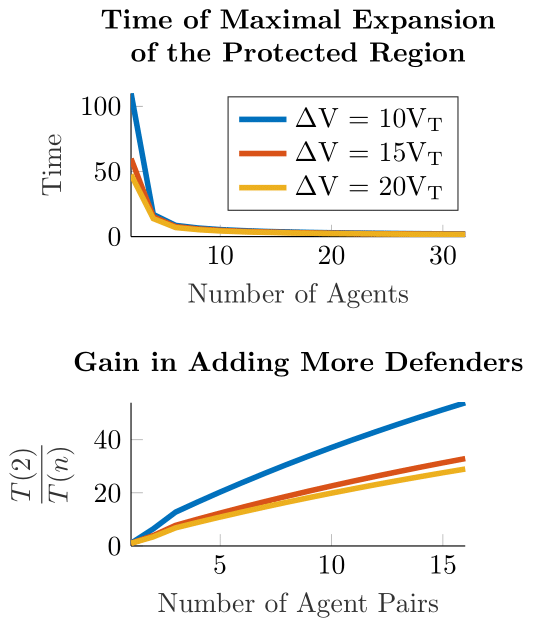} \caption{Time of maximal expansion of the protected region and gain in adding more defenders for different defender speeds. We simulated the circular defense pincer sweep processes for an even number of agents, ranging from $2$ to $32$ agents. The chosen values of the parameters are $r=10$, $V_T = 1$ and $R_{max} = 120$.}
\label{Fig8Label}
\end{figure}

\section{Spiral Defense Pincer Sweep Process}
\label{sec:spiral_defense}
In order to handle the inherent inefficiency of the circular defense pincer sweep protocol, which first and foremost results from the fact that at the beginning of each sweep only half the length of the defenders' sensors are inside the invader region, we propose a modification to the defense process that tackles this inefficiency. This modification strives to increase the part of the defenders' sensors over the invader region so that they can detect invaders further away from the protected region.
Therefore, a spiral scan in which the tip of a defender's sensor follows the expanding protected region's boundaries is proposed.  

\begin{figure}[!htb]
\noindent \centering{}\includegraphics[width=3.5in,height =2.464in]{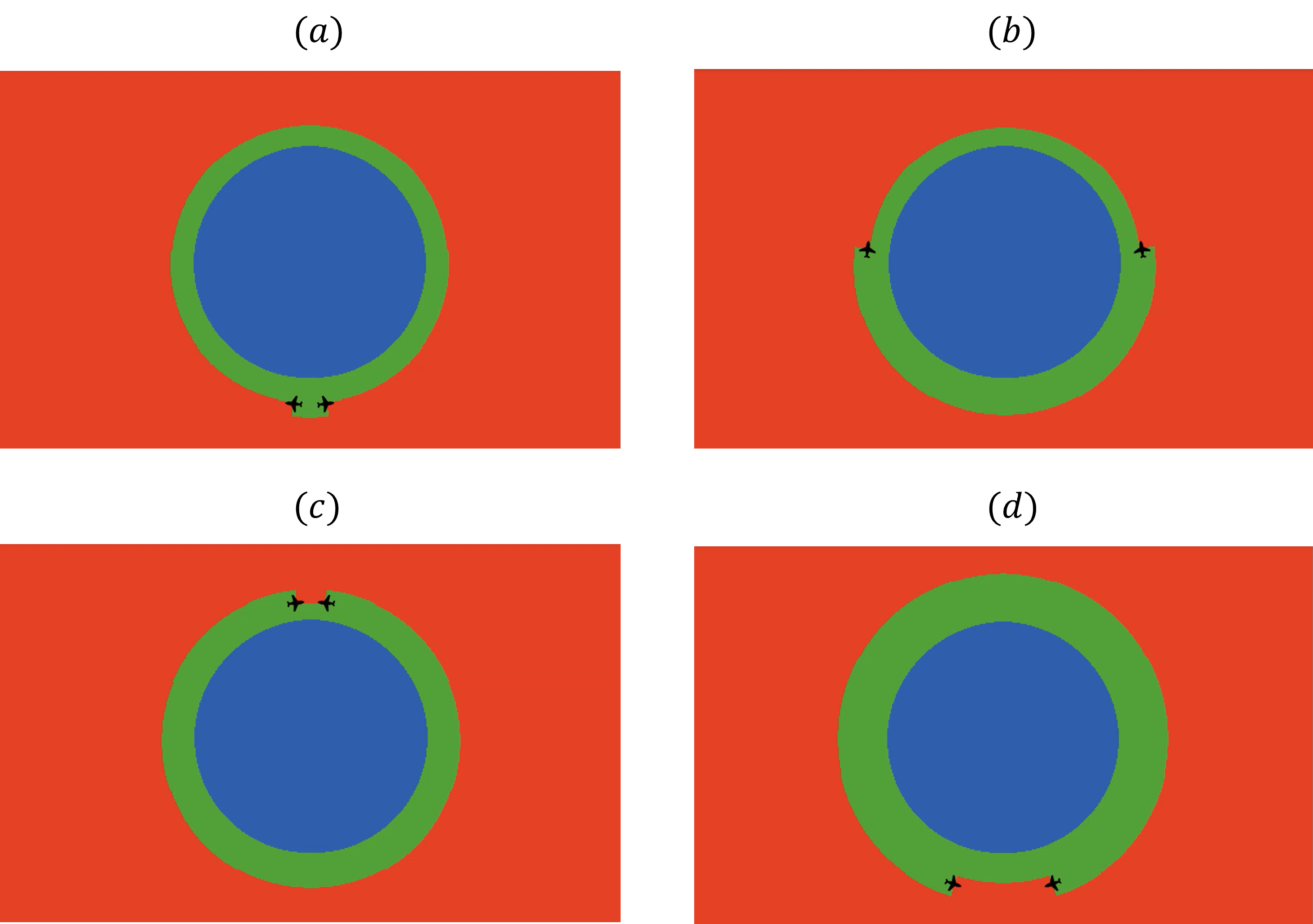} \caption{Swept areas and protected region's status for different times in a scenario where $2$ defenders perform the spiral defense pincer sweep process. (a) - Beginning of second sweep. (b) - Midway of the second sweep. (c) - End of the second sweep. (d) - End of the third sweep. Green areas show locations that were searched and hence do not contain invaders and red areas indicate locations where potential invaders may be present. Blue areas represent locations that belong to the initial protected region that does not contain invaders.}
\label{Fig9Label}
\end{figure}

\subsection{The Defense Task and Critical Speed Analysis}

At the beginning of the defense protocol, each defender's sensor overlaps the protected region by $0$ (and consequently a length of is $2r$ over the complementing invader region). Sweeping in a pincer movement enables defenders to have a critical speed that is based only upon the time it takes them to traverse their allocated angular section of $\frac{2\pi}{n}$. In a similar manner as in the circular defense pincer sweep process, defenders sweeping at greater speeds than the corresponding critical speed of the scenario, switch their search direction once they finish their outward advancement phase. At the next iteration defenders sweep around a section having a larger radius.

Defenders' begin their spiral motion with their sensors' tips tangent to the boundary of the protected region. To keep their sensors tangent to the protected region during the spiral sweeping phases, defenders move at an angle $\phi$ with respect to the normal of the protected region. $\phi$ depends on the ratio between defender and invader speeds. Moving by a constant angle $\phi$ with respect to the normal of the protected region allows to preserve the protected region's circular shape and to keep the entire length of the defenders' sensors outside of the protected region, thus enabling detection of invaders at greater distances from the boundary of the protected region. Fig. $1$ (b) depicts the starting locations of $2$ defenders performing the spiral defense pincer sweep process. Fig. $9$ shows the cleaning progress during the expansion of the protected region when $2$ defenders employ the spiral defense pincer sweep protocol. $\phi$ is given by,
\begin{equation}
\sin \phi  = \frac{{{V_T}}}{{{V_s}}}
\label{e1000}
\end{equation}
Hence, $\phi  = \arcsin \left( {\frac{{{V_T}}}{{{V_s}}}} \right)$. The defender's angular speed or rate of change of its angle with respect to the center of the protected region, $\theta _s$, can be expressed by the following  function of $\phi$,
\begin{equation}
\frac{{d{\theta _s}}}{{dt}} = \frac{{{V_s}\cos \phi }}{{{R_s}(t)}} = \frac{{\sqrt {{V_s}^2 - {V_T}^2} }}{{{R_s}(t)}}
\label{e1002}
\end{equation}
Integration of (\ref{e1002}) between the initial and final search times of a particular sweep leads to,
\begin{equation}
\int_0^{{t_\theta }} {\dot \theta \left( \zeta  \right)} d\zeta  = \int_0^{{t_\theta }} {\frac{{\sqrt {{V_s}^2 - {V_T}^2} }}{{  {R_0} + r - {V_T}\zeta}}d} \zeta 
\label{e36}
\end{equation}
Solving for $\theta \left( {{t_\theta }} \right)$ from  (\ref{e36}) results in,
\begin{equation}
\theta \left( {{t_\theta }} \right) = -\frac{{\sqrt {{V_s}^2 - {V_T}^2} }}{{{V_T}}}\ln \left( {\frac{{{R_0} + r - {V_T}{t_\theta }}}{{{R_0} + r }}} \right)  
\label{e15}
\end{equation}
Applying the exponent function to (\ref{e15}) yields,
\begin{equation}
\left( {{R_0} + r} \right){e^{-\frac{{{V_T}{\theta(t_\theta) }}}{{\sqrt {{V_s}^2 - {V_T}^2} }}}}  = R_0 + r - {V_T}{t_\theta }= {R_s}({t_\theta })
\label{e38}
\end{equation}

The time required for a defender to complete a spiral scan of the angular section under its responsibility corresponds to changing its angle $\theta$ by $\frac{2\pi}{n}$. The expansion of the invading region at this time must be less than or equal to $2r$, so that defenders will still be able to prevent the entrance of invaders to the protected region. Since during the defenders' outwards movements invaders may continue to progress toward the protected region, defenders' can only move outwards by a somewhat smaller distance to address this concern. If we were to neglect invaders' motion during the outward movement phases, the necessary requirement to ensure invaders cannot enter the protected region without being detected by the defenders is,
\begin{equation}
{R_0} - r \le {R_s}({t_\frac{2\pi}{n}})
\label{e39}
\end{equation}
Define,
\begin{equation}
\lambda = {e^{-\frac{{2\pi {V_T}}}{{n\sqrt {{V_s}^2 - {V_T}^2} }}}}
\label{e40}    
\end{equation}
Replacing ${R_s}({t_\frac{2\pi}{n}})$ with the expression for the defender's trajectory results in,
\begin{equation}
{R_0} - r \le \left( {{R_0} + r} \right)\lambda
\label{e42}
\end{equation}
Hence, to guarantee invaders cannot enter the protected region throughout the spiral scans without being detected, the defenders' speeds has to satisfy,
\begin{equation}
{V_S} \ge {V_T}\sqrt {\frac{{{{\left( {\frac{{2\pi }}{n}} \right)}^2}}}{{{{\left( {\ln \left( {\frac{{{R_0} + r}}{{{R_0} - r}}} \right)} \right)}^2}}} + 1}
\label{e43}
\end{equation}
In order to consider the progression of invaders toward the protected region and modify the critical speed in (\ref{e43}) to cope with this motion, defenders move outward from the protected region after completing a spiral sweep until meeting the invader region's wavefront travelling towards them with a speed of $V_T$ at the previous radius $R_0$. This consideration guarantees that all invaders are detected and that the critical speed of the spiral defense pincer sweep protocol is nearly equal to the optimal lower bound on the defender speed of Theorem $1$. Denote the expansion of the invader region in the first sweep by ${T_c}$. In order to ensure no invaders enter the region without being detected the following inequality must hold, ${V_T}{T_c} \leq \frac{{2r{V_s}}}{{{V_s} + {V_T}}}$. Replacing the expression for ${T_c}$ yields,
\begin{equation}
\left( {{R_0} + r} \right)\left(1- \lambda \right) \leq \frac{{2r{V_s}}}{{{V_s} + {V_T}}}
\label{e49}
\end{equation}

\begin{thm}
In the spiral defense pincer sweep process, the critical speed ,${V_s}$, enabling to successfully complete the defense task is obtained as the solution of,
\begin{equation}
{V_T}{T_c} = \frac{{2r{V_s}}}{{V_s} + {V_T}}
\label{e98996}
\end{equation}
Where ${T_c}$ equals,
\begin{equation}
{T_c} = \frac{{\left( {{R_0} + r} \right)\left( 1 -\lambda \right)}}{{{V_T}}}
\label{e98997}
\end{equation}
\end{thm}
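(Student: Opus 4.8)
The plan is to assemble the single-sweep geometry already derived above and then impose the worst-case containment requirement, mirroring the circular critical-speed analysis. First I would recall that a defender keeping the tip of its sensor tangent to the circular protected boundary must travel at the fixed angle $\phi=\arcsin(V_T/V_s)$ to the normal, so that its angular rate is $\dot\theta=\sqrt{V_s^{2}-V_T^{2}}/R_s(t)$ as in~(\ref{e1002}); integrating this against the linear law $R_s(t)=R_0+r-V_T t$ yields the logarithmic spiral~(\ref{e38}), $R_s(t_\theta)=(R_0+r)\exp\!\big(-V_T\theta/\sqrt{V_s^{2}-V_T^{2}}\big)$. One full sweep of a defender's sector is the angular increment $\theta=2\pi/n$, so the tracked front shrinks from $R_0+r$ to $(R_0+r)\lambda$ with $\lambda$ as in~(\ref{e40}); equating this with $R_s=R_0+r-V_T t$ at $\theta=2\pi/n$ shows that the duration of one spiral sweep is exactly $t_{2\pi/n}=(R_0+r)(1-\lambda)/V_T$, which is the quantity $T_c$ of~(\ref{e98997}).

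Second, I would establish the containment condition. By the pincer symmetry and the worst-case argument taken from~\cite{francos2019search}, the question reduces to a single defender sweeping its $2\pi/n$ sector, and since $R_s(t)$ is monotonically decreasing along a sweep the last potential invader to be detected is the one at the outer tip of the sensor, where neighbouring defenders meet. That invader drifts inward by $V_T t_{2\pi/n}=V_T T_c$ during the sweep; the naive requirement $R_0-r\le R_s(t_{2\pi/n})$ of~(\ref{e42}) says this drift may not exceed the full sensor length $2r$ (contrast the reserve $r$ available in the circular process, cf.~(\ref{e1090})). To account for the invaders' continued advance during the post-sweep outward redeployment — which here is needed even merely to re-establish a tangent $2r$-footprint at the previous radius — the usable share of that $2r$ is scaled by $V_s/(V_s+V_T)$, exactly the relative-speed reduction that produced~(\ref{e500}). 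This yields the feasibility inequality $V_T T_c\le 2rV_s/(V_s+V_T)$, that is~(\ref{e49}); the critical speed, being the smallest $V_s$ for which containment is feasible, is the one turning this into the equality~(\ref{e98996}).

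Finally, I would check that this equality singles out a genuine threshold. Writing $g(V_s):=(R_0+r)\big(1-\lambda(V_s)\big)-2rV_s/(V_s+V_T)$ on $V_s\in(V_T,\infty)$, the first term is continuous and strictly decreasing (since $\lambda\uparrow1$ as $V_s\to\infty$) while $2rV_s/(V_s+V_T)=2r/(1+V_T/V_s)$ is strictly increasing, so $g$ is strictly decreasing; moreover $g(V_s)\to R_0>0$ as $V_s\downarrow V_T$ and $g(V_s)\to-2r<0$ as $V_s\to\infty$. Hence $g$ has a unique zero, the feasible speeds form the ray above it — with strict slack above the zero, which is precisely the surplus later used to enlarge the region — and speeds below it fail; so the root of~(\ref{e98996})--(\ref{e98997}) is the critical speed. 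The one genuinely delicate step is the second: justifying the $V_s/(V_s+V_T)$ correction by pinning down exactly how the post-sweep outward redeployment trades off against the invaders' continued inward motion. Everything else is bookkeeping on the already-derived spiral trajectory.
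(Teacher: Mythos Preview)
Your proposal is correct and follows essentially the same route as the paper: the text preceding Theorem~5 in the paper derives the spiral trajectory~(\ref{e38}), reads off the single-sweep duration $T_c=(R_0+r)(1-\lambda)/V_T$, and then imposes the relative-speed correction $V_s/(V_s+V_T)$ to obtain~(\ref{e49}), with the critical speed defined by equality. Your third paragraph, the monotonicity argument showing that $g(V_s)=(R_0+r)(1-\lambda)-2rV_s/(V_s+V_T)$ has a unique zero on $(V_T,\infty)$, is a useful addition that the paper does not make explicit --- it simply asserts that the equation is solved numerically by Newton--Raphson from the initial guess~(\ref{e43}) --- so you supply existence and uniqueness where the paper only assumes them.
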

The critical speed required to successfully perform the spiral defense pincer sweep strategy is computed by solving numerically the equation in Theorem $5$ with the Newton–Raphson method while using the critical speed in (\ref{e43}) as an initial guess. This speed, that guarantees all invaders are detected, is used in all further calculations of this section.

As shown in Fig. $10$. the spiral critical speed nearly equals the optimal lower bound, specifically for a small number of defenders. For example, when $2$ defenders perform the defense protocol the ratio between the spiral critical speed and the optimal lower bound on the speed is $1.06$. 

\begin{figure}[!htb]
\noindent \centering{}\includegraphics{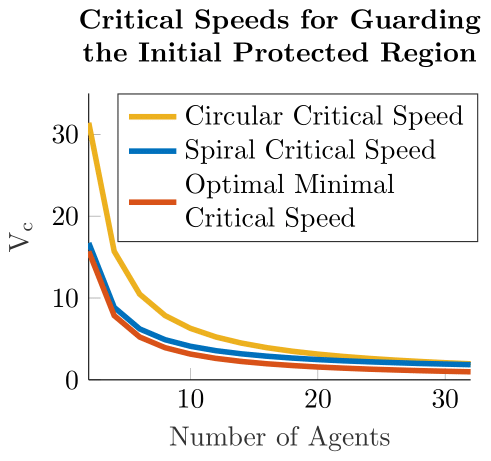} \caption{Critical speeds as a function of the number of defenders. The number of defenders is even, and ranges from $2$ to $32$ defenders, that perform the spiral defense pincer sweep protocol. The optimal lower bound on the critical speeds and the resulting critical speeds of the circular defense pincer sweep protocol are presented for comparison as well. The chosen values of the parameters are $r=10$, $V_T = 1$ and $R_0 = 100$.}
\label{Fig10Label}
\end{figure}

\subsection{The Maximal Expansion Task}
\subsubsection{Number of Sweeps Analysis}

\begin{thm}
The maximal radius that $n$ spiral sweeping defenders, with a linear sensor of length $2r$, a given speed $V_s$ and a maximal invader speed of $V_T$ can safely protect against the entrance of invaders is,
\begin{equation}
{\bar R_{{N_s}}} = \frac{{2r{V_s}}}{{\left( {1 - \lambda } \right)\left( {{V_s} + {V_T}} \right)}} - r
\label{e1004}
\end{equation}
\end{thm}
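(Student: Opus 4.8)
The plan is to extract $\bar R_{N_s}$ from the spiral defense condition already established in Section~\ref{sec:spiral_defense} and then argue separately that it is genuinely maximal. The key observation is that the critical‑speed derivation leading to (\ref{e39})--(\ref{e49}) used only that each spiral sweep begins with the defenders' sensors tangent, from outside, to a circle of radius $R_0$; replacing $R_0$ throughout by an arbitrary current radius $R$ leaves the argument intact, so a full spiral traversal of a sector of angle $\tfrac{2\pi}{n}$ around a circle of radius $R$ keeps the region guarded — including the inward drift of the invader region during the subsequent outward advancement — precisely when
\begin{equation}
\left( R+r \right)\left( 1-\lambda \right)\;\le\;\frac{2r\,V_s}{V_s+V_T},
\label{e9001}
\end{equation}
i.e. Theorem~5's relation with $R$ in place of $R_0$, where $\lambda$ is the $R$‑independent constant of (\ref{e40}). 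Solving (\ref{e9001}) at equality for $R$ already yields $\bar R_{N_s}=\tfrac{2rV_s}{(1-\lambda)(V_s+V_T)}-r$.

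Next I would connect this to the expansion dynamics. Mirroring the circular derivation (\ref{e1068})--(\ref{e1094}), after the $i$‑th spiral sweep around a circle of radius $R_i$ the defenders advance outward until their sensors are again tangent to the enlarged protected circle; charging the $2r$ sensor slack with the $\tfrac{V_s}{V_s+V_T}$ factor that pays for the invaders closing in during that advancement, and subtracting the drift $V_T T_c(R_i)=(R_i+r)(1-\lambda)$ accrued during the sweep, yields an affine update
\begin{equation}
R_{i+1}\;=\;\lambda\,R_i+\left(\frac{2r\,V_s}{V_s+V_T}-r(1-\lambda)\right),
\label{e9002}
\end{equation}
which, since $0<\lambda<1$, is a monotone contraction whose unique fixed point is exactly the $\bar R_{N_s}$ above. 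Hence, whenever $V_s$ exceeds the spiral critical speed of Theorem~5 so that $R_0$ lies below this fixed point, $\{R_i\}$ is strictly increasing and converges to $\bar R_{N_s}$; in particular any target $\bar R_{N_s}-\varepsilon$ is reached after finitely many sweeps (the $\varepsilon$ appearing, as in the circular case, because the integer‑sweep ceiling slightly overshoots), after which the team holds that radius by perpetual patrol.

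For maximality I would invoke the relabelled Theorem~5 directly: a team moving at the given speed $V_s$ can defend a circle of radius $R$ only if (\ref{e9001}) holds, and since its left‑hand side increases with $R$, every $R>\bar R_{N_s}$ is undefendable — the invader region necessarily encroaches on a circle of that radius within a single sweep — regardless of how that radius was reached. Combining this with the previous paragraph, $\bar R_{N_s}$ is simultaneously attainable (in the limit) and an upper bound on what any guaranteed‑defense spiral strategy at speed $V_s$ can protect, which is the assertion of the theorem.

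The step I expect to be the genuine obstacle is not the algebra (solving the linear fixed‑point equation, routine and parallel to Appendix~A of \cite{francos2021swarms}), but the bookkeeping behind the constant of (\ref{e9002}) — in particular getting the $\tfrac{V_s}{V_s+V_T}$ discount of the sensor slack right — together with the clean logical passage from ``the single‑sweep inequality (\ref{e9001}) fails at radius $R$'' to ``a region of radius $R$ cannot be guaranteed'', i.e. excluding the possibility that some cleverer scheduling of sweeps and outward advancements could exceed $\bar R_{N_s}$.
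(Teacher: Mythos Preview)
Your opening move --- relabelling $R_0\mapsto R$ in the single-sweep defense inequality (\ref{e49}) and solving at equality --- is exactly how the paper arrives at the formula in Theorem~6 (the theorem carries no separate proof; the expression is simply the largest $R$ for which (\ref{e49}) holds). That half of your plan is correct and already suffices for the statement.

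The recurrence half, however, contains precisely the bookkeeping slip you flagged as the danger point. In the paper's derivation (equations (\ref{e1015}), (\ref{e700}), (\ref{e1097})--(\ref{e1073})) the factor $\tfrac{V_s}{V_s+V_T}$ discounts the \emph{entire} net outward allowance $\delta_i=2r-(R_i+r)(1-\lambda)$, not only the $2r$ slack:
\[
R_{i+1}=R_i+\bigl(2r-(R_i+r)(1-\lambda)\bigr)\,\frac{V_s}{V_s+V_T},
\]
so that, with $\tilde R_i=R_i+r$, the contraction multiplier is $c_2=\tfrac{V_T+V_s\lambda}{V_s+V_T}$, not $\lambda$. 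Your (\ref{e9002}) instead discounts only the $2r$ and subtracts the sweep drift $(R_i+r)(1-\lambda)$ undiscounted, which is not the paper's model of the outward-advance phase.

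This is not cosmetic: the two recurrences have different fixed points. The paper's converges to $\tfrac{2r}{1-\lambda}-r$ --- and indeed that, not $\bar R_{N_s}$, is the $R_{\max}$ actually used in (\ref{e901}) --- whereas yours happens to converge to $\tfrac{2rV_s}{(1-\lambda)(V_s+V_T)}-r=\bar R_{N_s}$. So the tidy picture ``fixed point of the expansion map equals $\bar R_{N_s}$'' does not hold under the paper's own dynamics; the recurrence and Theorem~6 measure slightly different things (the steady state of the sweep--advance iteration versus the largest radius at which the defense inequality, with its outward-advance penalty, is satisfied). Rely on the defense-condition argument alone for Theorem~6, and keep the recurrence analysis for Theorem~7 where it belongs.
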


\begin{thm}
For a defender team with $n$ defenders for which $n$ is even, performing the spiral defense pincer sweep process, the number of sweeps required for the defender team to expand the protected region to a circle with a radius that is $\varepsilon$ close to the maximal boundable radius ${{\bar R}_{N_s}}$ is given by,
\begin{equation}
{N_n} = \left\lceil {\frac{{\ln \left( {\frac{{2r{V_s}\left( {1 - {V_T}} \right) - \varepsilon \left( {1 - \lambda } \right)\left( {{V_s} + 1} \right)\left( {{V_s} + {V_T}} \right)}}{{\left( {{V_s} + {V_T}} \right)\left( {\left( {{R_0} + r} \right)\left( {1 - \lambda } \right)\left( {{V_s} + 1} \right) - 2r{V_s}} \right)}}} \right)}}{{\ln \left( {\frac{{{V_T} + {V_s}\lambda  - 1 + \lambda }}{{{V_s} + {V_T}}}} \right)}}} \right\rceil
\label{e98992}
\end{equation}
Denote by ${T_{out}}(n)$ the sum of all outward advancement times and by ${T_{spiral}}(n)$ the sum of all the spiral search times. Hence, the total search time necessary for the defender team to expand the protected region to its maximal size is given by,
\begin{equation}
T(n) = {T_{out}}(n) + {T_{spiral}}(n)
\label{e98993}
\end{equation}
${T_{out}}(n)$ is given by,
\begin{equation}
\begin{array}{l}
{T_{out}}(n) = \frac{{2r\left( {{V_s} + 1 + {V_T}} \right)}}{{\left( {1 - \lambda } \right)\left( {{V_s} + {V_T}} \right)\left( {{V_s} + 1} \right)}} - \frac{{{R_0} + 2r + \varepsilon }}{{{V_s}}} - \frac{{r\left( {{V_s} - 1 + \lambda {V_s} + \lambda } \right)}}{{{V_s}\left( {1 - \lambda } \right)\left( {{V_s} + 1} \right)}}
\end{array}
\label{e769}
\end{equation}
And $T_{spiral}(n)$ is given by,
\begin{equation}
\begin{array}{l}
{T_{spiral}}(n) = \frac{{{R_0}{V_s} + {R_0}{V_T} + r{V_T} + 2r{V_s}{N_n} - r{V_s}}}{{{V_T}\left( {{V_s} + 1} \right)}}  \\ - \frac{{2r{V_s}\left( {{V_T} + {V_s}\lambda  - 1 + \lambda } \right)}}{{{V_T}{{\left( {{V_s} + 1} \right)}^2}\left( {1 - \lambda } \right)}}\\ -  \frac{{{V_s} + {V_T}}}{{\left( {1 - \lambda } \right)\left( {{V_s} + 1} \right)}}{\left( {\frac{{{V_T} + {V_s}\lambda  - 1 + \lambda }}{{{V_s} + {V_T}}}} \right)^{{N_n}}}\left( {\frac{{\left( {{R_0} + r} \right)\left( {1 - \lambda } \right)}}{{{V_T}}} - \frac{{2r{V_s}}}{{{V_T}\left( {{V_s} + 1} \right)}}} \right)
\end{array}
\label{e776}
\end{equation}

\end{thm}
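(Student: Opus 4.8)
The plan is to mirror, step by step, the analysis of the circular process in Section~\ref{sec:circular_defense} (and Theorem~$4$), replacing the circular geometry by the spiral one while reusing the trajectory formula (\ref{e38}) that was already established. I would organise the argument in four parts: (a) the duration of one spiral sweep and the recursion for the radii $R_i$ at which successive sweeps begin; (b) solving that recursion, identifying its fixed point with the maximal radius $\bar R_{N_s}$ of Theorem~$6$, and extracting $N_n$; (c) summing the outward-advancement times; (d) summing the spiral-sweep times, then adding (c) and (d) to obtain $T(n)$.

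For part (a): specialising (\ref{e38}) to a spiral sweep that begins at radius $R_i$ (i.e. with $R_0$ replaced by $R_i$) and terminates when the defender's angular coordinate has advanced by $\frac{2\pi}{n}$ gives $R_s\!\left(T_{spiral_i}\right) = (R_i + r)\lambda$, hence
\begin{equation}
T_{spiral_i} = \frac{(R_i + r)(1-\lambda)}{V_T},
\end{equation}
which reduces to $T_c$ of (\ref{e98997}) at $i=0$. At the end of that sweep the worst-case invader --- the one that started at the outer tip of the sensor --- has been driven down to radius $(R_i+r)\lambda + r$, which is therefore the momentary radius of the protected region, and the centre of the sensor sits at $(R_i+r)\lambda$. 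The defender then moves radially outward at $V_s$ while that boundary recedes inward at $V_T$ (invaders keep advancing), until the sensor is again tangent to the protected region, which is the starting configuration of the next spiral sweep. Imposing this re-tangency condition fixes both the outward-advancement time and the new radius $R_{i+1}$, and after simplification yields an affine recursion $R_{i+1} = \lambda R_i + \beta$ with $\beta = \lambda r + \frac{r(V_s - V_T)}{V_s + V_T}$, the analogue of (\ref{e1300}) for the spiral process.

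For part (b): the recursion is solved as $R_i = \bar R_{N_s} - \lambda^{\,i}\bigl(\bar R_{N_s} - R_0\bigr)$, and a short algebraic check shows $\tfrac{\beta}{1-\lambda}$ equals the expression (\ref{e1004}) for $\bar R_{N_s}$, so the protected region grows monotonically to $\bar R_{N_s}$ from below, consistent with Theorem~$6$. Requiring the radius of the last sweep to lie within $\varepsilon$ of $\bar R_{N_s}$, i.e. $\lambda^{\,N_n-1}\bigl(\bar R_{N_s} - R_0\bigr)\le \varepsilon$, taking logarithms and applying $\lceil\cdot\rceil$ exactly as in Appendix~A of \cite{francos2021swarms}, then substituting the explicit forms of $\lambda$, $\bar R_{N_s}$ and $\beta$, produces (\ref{e98992}); as in the circular case the ceiling forces the defenders to complete an integer number of full sweeps. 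For part (c) I would add up the outward-advancement times associated with sweeps $0,\dots,N_n-2$ --- a finite geometric sum once the closed form for $R_i$ is inserted --- and then the final advancement time $T_{out_{last}} = \bigl(R_{max} - R_{N_n-1}\bigr)/V_s$ to $R_{max} = \bar R_{N_s} - \varepsilon$; the $\lambda^{\,N_n}$ contributions cancel between the two pieces precisely because the recursion's fixed point is $\bar R_{N_s}$, leaving (\ref{e769}). For part (d), since $T_{spiral_i}$ is affine in $R_i$ it inherits an affine recursion $T_{spiral_{i+1}} = \lambda T_{spiral_i} + \frac{2rV_s(1-\lambda)}{V_T(V_s+V_T)}$ with $T_{spiral_0} = T_c$; summing it over the $N_n$ sweeps by the same steps as Appendices~C and D of \cite{francos2021swarms} gives (\ref{e776}), and $T(n) = T_{out}(n) + T_{spiral}(n)$ is (\ref{e98993}).

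The step I expect to be the main obstacle is part (a): one must pin down exactly where the sensor lies relative to the protected boundary at the end of a spiral sweep, and how the $V_s$-outward-versus-$V_T$-inward race during the re-tangency manoeuvre determines the next starting radius, so that the resulting affine recursion has coefficients whose fixed point coincides with the independently derived $\bar R_{N_s}$ of Theorem~$6$ --- this is where the spiral geometry genuinely departs from the circular one and where a sign or a factor $\frac{V_s}{V_s+V_T}$ is easy to misplace. Once $R_{i+1} = \lambda R_i + \beta$ and the sweep-time recursion are in hand, everything else is the bookkeeping of geometric sums truncated at the ceiling-defined index $N_n$, entirely parallel to the circular computation and to the appendices of \cite{francos2021swarms}.
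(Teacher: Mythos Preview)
Your four-part plan mirrors the paper's proof precisely, and your expression for the single-sweep duration $T_{spiral_i}=\frac{(R_i+r)(1-\lambda)}{V_T}$ and the additive constant $c_3=\frac{2rV_s(1-\lambda)}{V_T(V_s+V_T)}$ in the sweep-time recursion both agree with the paper. The structural machinery you invoke (solving the affine recursion, reading off the fixed point, geometric sums, the final $T_{out_{last}}$ correction) is exactly what the paper uses.

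The substantive discrepancy is the recursion itself. You obtain $R_{i+1}=\lambda R_i+\beta$, i.e.\ contraction factor $\lambda$, by letting the defender move outward from its end-of-sweep position (centre at $(R_i+r)\lambda$) across the fixed gap $2r$ at closing speed $V_s+V_T$, giving a constant $T_{out_i}=\frac{2r}{V_s+V_T}$. The paper instead defines $\delta_i=2r-V_T T_{spiral_i}$ and sets
\[
R_{i+1}=R_i+\delta_i\,\frac{V_s}{V_s+V_T},\qquad T_{out_i}=\frac{\delta_i}{V_s+V_T},
\]
which, after the substitution $\tilde R_i=R_i+r$, gives the recursion $\tilde R_{i+1}=c_2\tilde R_i+c_1$ with
\[
c_2=\frac{V_T+V_s\lambda}{V_s+V_T}\neq\lambda,\qquad c_1=\frac{2rV_s}{V_s+V_T}.
\]
In words, the paper applies the $\frac{V_s}{V_s+V_T}$ scaling to the whole slack $\delta_i$ measured from the \emph{pre-sweep} radius $R_i$, exactly as in the circular case, rather than running the outward race from the defender's post-spiral position as you do. Both recursions share the same fixed point $\bar R_{N_s}$ (which is why your consistency check against Theorem~$6$ succeeded), but their contraction rates differ, and it is $c_2$, not $\lambda$, that appears throughout the stated formulas: the denominator $\ln c_2$ in~(\ref{e98992}), the factors $c_2^{N_n}$ in~(\ref{e776}), and the iteration-dependent $T_{out_i}$ underlying~(\ref{e769}). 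Your derivation would therefore produce internally consistent but different closed forms for $N_n$, $T_{out}(n)$ and $T_{spiral}(n)$ that do not match the theorem as stated. To reproduce the paper's result, replace your re-tangency computation in part~(a) by the paper's rule $\delta_{i_{eff}}=\delta_i\frac{V_s}{V_s+V_T}$ applied to $\delta_i=2r-(R_i+r)(1-\lambda)$; everything downstream then goes through unchanged with $c_2$ in place of $\lambda$.
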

\begin{proof}
Denote by $\Delta V >0$ the excess speed of the defender above the critical speed. Hence, the defender's speed is, $V_s = V_c + \Delta V$. At the start of each sweep the center of the defender's sensor is at a distance of ${R_i} + r$ from the protected region's  center. $\theta \left( {{t_\theta }} \right)$ is calculated in (\ref{e15}). Substituting $R_0$ with $R_i$ results in,

\begin{equation}
\theta \left( {{t_\theta }} \right) = -\frac{{\sqrt {{V_s}^2 - {V_T}^2} }}{{{V_T}}}\ln \left( {\frac{{{R_i} + r - {V_T}{t_\theta }}}{{{R_i} + r }}} \right) 
\label{e46}
\end{equation}

Denote the time required for a defender to complete the search of an angular section of $\theta \left( {{t_\theta }} \right)=\frac{2\pi}{n}$ by ${T_{spiral}}_i$. It is obtained from (\ref{e46}) and equals,
\begin{equation}
{T_{spiral}}_{_i} = \frac{{\left( {{R_i} + r} \right)\left( {1 - \lambda} \right)}}{{{V_T}}}
\label{e47}
\end{equation}

If defenders move with speeds greater than the critical speed required for the scenario, the distance a defender may advance outward from the center of the protected region is ${\delta _i}(\Delta V)$,
\begin{equation}
{\delta _i}(\Delta V) = 2r - {V_T}{T_{spiral}}_i \hspace{1mm}, \hspace{1mm} 0  \le {\delta _i}(\Delta V) \le 2r
\label{e1015}
\end{equation}
Once defenders finish the outward advancement phase, the protected region expands to an updated circular protected region with a radius of ${R_{i + 1}} = {R_i} + {\delta _i}(\Delta V)$. At the end of the spiral maneuver the protected region is again circular, with a larger radius. The proof for this property follows similar steps as provided in Appendix $H$ of \cite{francos2021swarms}.

Depending on the number of participating defenders and the iteration number, the distance a defender can advance outwards after completing a sweep is,
\begin{equation}
{\delta _i}(\Delta V) = 2r - \left({{R}_i +r}\right)\left( {1 - \lambda} \right)
\label{e1067}
\end{equation}
Where in the term ${\delta _i}(\Delta V)$, $\Delta V$ denotes the increase in the agent's speed relative to the critical speed. The number of sweep iterations the defenders performed around the protected region is denoted by $i$, where $i$ starts from sweep number $0$.

Since during the time in which defenders move outwards from the protected region, invaders continue to advance towards it, defenders can advance outwards by a slightly lesser distance than ${\delta _i}(\Delta V)$. The time required for defenders to move outwards until their entire sensors are outside of the protected region depends on the relative speed between the defenders' outward speed and the invader region's inwards expansion speed. Therefore, defenders are able to advance outwards by,
\begin{equation}
{\delta _{{i_{eff}}}}(\Delta V) = {\delta _i}(\Delta V)\left( {\frac{{{V_s}}}{{{V_s} + {V_T}}}} \right)
\label{e700}
\end{equation}
Hence, the radius of the expanded circular protected region is,
\begin{equation}
{R_{i + 1}} = {R_i} + {\delta _i}(\Delta V)\left( {\frac{{{V_s}}}{{{V_s} + {V_T}}}} \right)
\label{e1097}
\end{equation}
Denote $\widetilde {R}_i = R_i + r$. Replacing $R_i$ with $\widetilde {R}_i$ results in a similar structure of formulas as in the circular defense pincer sweep process and enables to use the same methodology along with the appropriate change of coefficients to solve for the maximal defendable radius of the protected region.
Replacing the expression for ${{\delta _i}(\Delta V)}$ into (\ref{e1097}) yields,
\begin{equation}
{{\tilde R}_{i + 1}} = {{\tilde R}_i} + \left( {2r - {{\tilde R}_i}\left( {1 - \lambda} \right)} \right)\frac{{{V_s}}}{{{V_s} + {V_T}}}
\label{e1072}
\end{equation}
Rearranging terms results in a difference equation that resembles the equation obtained for the circular defense pincer sweep process,
\begin{equation}
{{\tilde R}_{i + 1}} = {{\tilde R}_i}\left( {\frac{{{V_T} + {V_s}\lambda}}{{{V_s} + {V_T}}}} \right) + \frac{{2r{V_s}}}{{{V_s} + {V_T}}}
\label{e1073}
\end{equation}
Denote the coefficients in (\ref{e1073}) by,
\begin{equation}
{c_1} =  \frac{{2r{V_s}}}{{{V_s} + {V_T}}}, {c_2} = \frac{{{V_T} + {V_s}\lambda}}{{{V_s} + {V_T}}}
\label{e1075}
\end{equation}
Hence, (\ref{e1073}) is expressed as,
\begin{equation}
\widetilde{R}_{i + 1} = {c_2}\widetilde{R}_i +{c_1}
\label{e345}    
\end{equation}

Since the defenders need to move outwards from protecting a region with a smaller radius, and during this outwards movement the protected region continues to shrink due to possible movements of invaders, the defenders can protect a slightly smaller region. For any even number of defenders, $n$, the expansion protocol continues in this way until the protected region is enlarged to the largest possible circle. Let $\varepsilon>0$ and denote by ${\widehat{R}_{{N_n} - 1}}$ the radius of the protected region that is $\varepsilon$ close to ${{\bar R}_{N_s}}$,
\begin{equation}
\widehat{R}{_{{N_n} - 1}} = R_{max} = \frac{{2r}}{{1 - \lambda }} - r - \varepsilon 
\label{e901}
\end{equation}

Due to the same difference equation structure as in the circular defense pincer sweep protocol, the number of spiral sweeps is calculated similarly. Hence, the number of iterations required for the defender team to expand the protected region to a circle of radius $\widehat{R}{_{{N_n} - 1}}$ is,

\begin{equation}
{N_n} = \left\lceil {\frac{{\ln \left( {\frac{{ - \left( {r + \varepsilon } \right)\left( {1 - \lambda } \right)}}{{\left( {{R_0} + r} \right)\left( {1 - \lambda } \right) - 2r}}} \right)}}{{\ln \left( {\frac{{{V_T} + {V_s}\lambda }}{{{V_s} + {V_T}}}} \right)}}} \right\rceil 
\label{e1076}
\end{equation}

The radius $R_{max} =\widehat{R}{_{{N_n} - 1}}$ is the maximal radius the protected region expands to, and is used to calculate the number of sweeps required to reach this radius. The actual radius of the protected region after $N_n$ sweeps is denoted by ${R}{_{{N_n} - 1}}$ and is computed after $N_n$ is calculated. After the last spiral sweep, the defenders perform the last outward advancement, and the defenders continue to perform spiral sweeps around a protected region with a radius of $R_{max}$.

\subsubsection{Outward Advancement Times Calculation}

The outward advancement time depends on the iteration number. It is denoted by ${T_{ou{t_i}}}$ and is expressed as,
\begin{equation}
{T_{ou{t_i}}} = \frac{{{\delta _{{i_{eff}}}}(\Delta V)}}{{{V_s}}} = \frac{{2r - {{\tilde R}_i}\left( {1 - \lambda} \right)}}{{{V_s} + {V_T}}}
\label{e1077}
\end{equation}

Denote the total outward advancement times until the protected region is enlarged to a circle of radius ${R}{_{{N_n} - 1}}$ by $\widetilde{T}_{out}$, where $\widetilde{T}_{out}(n) = \sum\limits_{i = 0}^{{N_n}-2} {T_{ou{t_i}}}$. 

Throughout the outward advancement phases the defenders
do not perform sweeping and detection of invaders and hence
invaders are not detected until the defenders finish their outwards motion and resume the sweeping of the protected
region. Following a defender’s completion of the outwards
progression phase, its sensor overlaps the invader region
$2r$ (and therefore the footprint of its sensor that is over the protected region is $0$). The total search time until the protected region is expanded into a circle of radius ${R}{_{{N_n} - 1}}$ is given by the sum of the total spiral and outward advancement sweep times. Hence,
\begin{equation}
T(n) = \widetilde{T}_{out}(n) + T_{spiral}(n)
\label{e1083}
\end{equation}
Replacing the expression for ${T_{ou{t_i}}}$ yields that the accumulative outward advancement times before the protected region is enlarged to its maximal size are,
\begin{equation}
\widetilde{T}_{out}(n) = \sum\limits_{i = 0}^{{N_n} - 2} {{T_{ou{t_i}}} = } \frac{{2r\left( {{N_n} - 1} \right)}}{{{V_s} + {V_T}}} - \frac{{\left( {1 - \lambda} \right)\sum\limits_{i = 0}^{{N_n} - 2} {{{\tilde R}_i}} }}{{{V_s} + {V_T}}}
\label{e702}
\end{equation}
The full derivation of $\widetilde{T}_{out}(n) = \sum\limits_{i = 0}^{{N_n} - 2} {{T_{ou{t_i}}}}$ is continued in Appendix $B$. Hence,

\begin{equation}
\begin{array}{l}
\widetilde{T}_{out}(n) = \frac{{2r}}{{{V_s} + {V_T}}} - \frac{{{{\tilde R}_0}}}{{{V_s}}} + \frac{{2r\left( {{V_T} + {V_s}\lambda } \right)}}{{{V_s}\left( {{V_s} + {V_T}} \right)\left( {1 - \lambda } \right)}} +\\ {\left( {\frac{{{V_T} + {V_s}\lambda }}{{{V_s} + {V_T}}}} \right)^{{N_n} - 1}}\left( {\frac{{{{\tilde R}_0}}}{{{V_s}}} - \frac{{2r}}{{{V_s}\left( {1 - \lambda } \right)}}} \right)
\end{array}
\label{e1084}
\end{equation}

${R}{_{{N_n} - 1}}$ is computed by the same methodology as in the circular defense sweep process section and is given by,
\begin{equation}
\begin{array}{l}
{R_{{N_n} - 1}} = \frac{{r\left( {1 + \lambda } \right)}}{{1 - \lambda }} + {\left( {\frac{{{V_T} + {V_s}\lambda }}{{{V_s} + {V_T}}}} \right)^{{N_n} - 1}}\left( {{R_0} + r - \frac{{2r}}{{1 - \lambda }}} \right)
 \end{array}
\label{e782}
\end{equation}
Following the last spiral sweep, the defenders perform the last outward advancement, until reaching $R_{max}$. The time it takes them to perform this last outward advancement is,
\begin{equation}
T_{out_{last}} = \frac{\left(\widehat{R}{_{{N_n} - 1}} - {R_{{N_n} - 1}}\right)}{V_s}
\label{e1103}
\end{equation}
After this last outward sweep, the defenders perform spiral sweeps around a protected region of radius $R_{max}$ and continuously protect the region from the entrance of invaders after reaching the maximal protected region they can guard.
Summing $\widetilde{T}_{out}(n)$ and the last outward advancement time in (\ref{e1103}) yields,
\begin{equation}
{T_{out}}(n) = \frac{{\lambda \left( {{R_0} + r + \varepsilon } \right) + r - {R_0} - \varepsilon }}{{{V_s}\left( {1 - \lambda } \right)}}
\label{e1104}
\end{equation}
\subsubsection{Spiral Sweep Times Calculation}
The time to perform a spiral sweep around radius $\widetilde{R}_i$ is calculated by multiplying $\widetilde{R}_i$ with $\frac{1 -\lambda }{V_T}$. Therefore, by multiplying (\ref{e1073}) with  $\frac{ 1 - \lambda}{V_T}$ the following difference equation for the sweep spiral times is,

\begin{equation}
{T_{i + 1}} = {c_2}{T_i} + {c_3}
\label{e37}
\end{equation}
Where the coefficient $c_3$ is,
\begin{equation}
{c_3} = \frac{{2r{V_s}\left( {1 - \lambda } \right)}}{{\left( {{V_s} + {V_T}} \right){V_T}}}
\label{e1062}
\end{equation}
The total spiral sweep times required to expand the protected region into its largest size are calculated by similar steps as the circular sweep times in the previous section. Hence,
\begin{equation}
T_{spiral}(n) = \frac{{{T_0} - {c_2}{T_{N_n - 1}} + \left( {{N_n} - 1} \right){c_3}}}{{1 - {c_2}}}
\label{e100}
\end{equation}
The time required for defenders to perform the first spiral sweep is,
\begin{equation}
{T_0} = \frac{{\left( {{R_0} + r} \right)\left( {1 - \lambda } \right)}}{{{V_T}}}
\label{e44}
\end{equation}
The time to perform the last spiral sweep before the protected region reaches its maximal radius of ${R}{_{{N_n} - 1}}$ is given by,
\begin{equation}
{T_{N_n - 1}} = \frac{{{c_3}}}{{1 - {c_2}}} +  {{c_2}^{{N_n} - 1}} \left( {{T_0} - \frac{{{c_3}}}{{1 - {c_2}}}} \right)
\label{e1089}
\end{equation}
Substitution of coefficients results in,
\begin{equation}
\begin{array}{l}
{T_{{N_n} - 1}} = \frac{{2r}}{{{V_T}}} + {\left( {\frac{{{V_T} + {V_s}\lambda }}{{{V_s} + {V_T}}}} \right)^{{N_n} - 1}}\left( {\frac{{{R_0}\left( {1 - \lambda } \right) - r\left( {1 + \lambda } \right)}}{{{V_T}}}} \right)
 \end{array}
\label{e729}
\end{equation}
Yielding that the total spiral sweep times are,
\begin{equation}
\begin{array}{l}
{T_{spiral}}(n) = \frac{{\left( {{R_0} + r} \right)\left( {{V_s} + {V_T}} \right)}}{{{V_T}{V_s}}} - \frac{{2r\left( {{V_T} + {V_s}\lambda } \right)}}{{{V_T}{V_s}\left( {1 - \lambda } \right)}} - \\ \frac{{{V_s} + {V_T}}}{{{V_s}\left( {1 - \lambda } \right)}}{\left( {\frac{{{V_T} + {V_s}\lambda }}{{{V_s} + {V_T}}}} \right)^{{N_n}}}\left( {\frac{{{R_0}\left( {1 - \lambda } \right) - r\left( {1 + \lambda } \right)}}{{{V_T}}}} \right) + \frac{{2r\left( {{N_n} - 1} \right)}}{{{V_T}}}
\end{array}
\label{e1086}
\end{equation}

\end{proof}

\subsubsection{Numerical Experiments}

Fig. $11$ presents the maximal protected region's radius that the defenders are able to protect. The maximal radius clearly depends on the number of defenders and their speeds. Fig. $12$ presents the number of sweeps required to expand the protected region to its maximal size as a function of $\varepsilon$. Fig. $13$ presents 
the expansion time of the protected region to $R_{N_s} - \varepsilon$ for a fixed speed exceeding the spiral critical speed of $2$ defenders that perform the spiral defense pincer sweep protocol. Fig. $14$ presents the total search times for different numbers of defenders. In all presented graphs the defenders' speed is equal and is independent of the number of defenders performing the expansion protocol, and is chosen so that search times of defender teams with different number of defenders are correctly compared. The values of $\Delta V$ mentioned in the plots are speeds above the critical speed of two defenders employing the spiral defense pincer sweep process. The second plot from the top of Fig. $14$. presents the search time reduction obtained when the number of participating defenders increases. The critical speed required for the defender team to perform the defense task is determined by solving numerically the equation presented in Theorem $5$, consequently ensuring invaders cannot enter the protected region undetected. 

\begin{figure}[!htb]
\noindent \centering{}\includegraphics{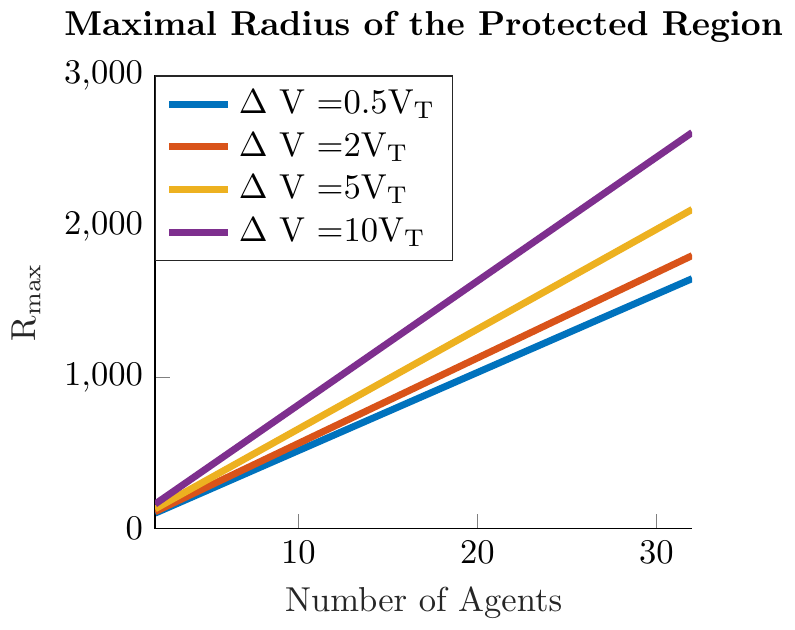} \caption{Maximal protected region's radius. The spiral defense pincer sweep processes were simulated with an even number of defenders, ranging from $2$ to $32$ defenders. The chosen values of the parameters are $r=10$, $V_T = 1$ and $\varepsilon = 0.2$.}
\label{Fig11Label}
\end{figure}

\begin{figure}[!htb]
\noindent \centering{}\includegraphics{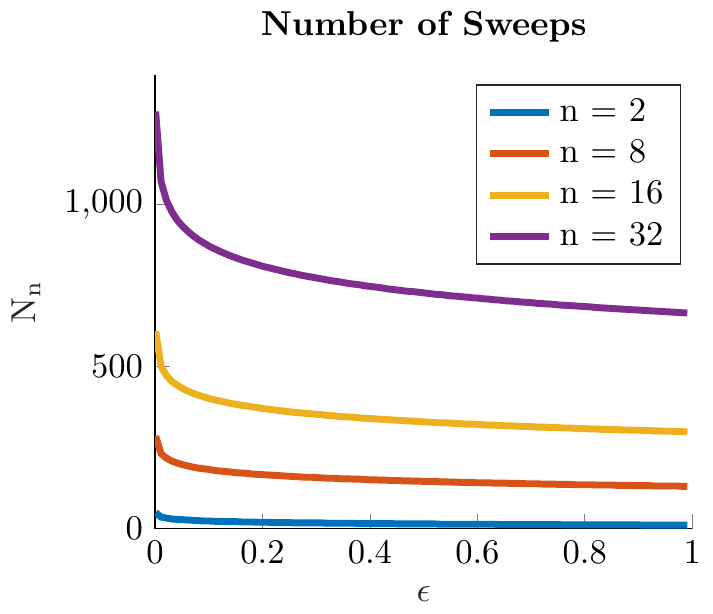} \caption{Number of sweeps required to expand the protected region to its maximal size as a function of $\varepsilon$. We plot the results for defenders performing the spiral defense pincer sweep processes with $2,8,16$ and $32$ defenders. The chosen values of the parameters are $r=10$, $V_T = 1$, $V_s = 17.2219$.}
\label{Fig12Label}
\end{figure}

\begin{figure}[!htb]
\noindent \centering{}\includegraphics{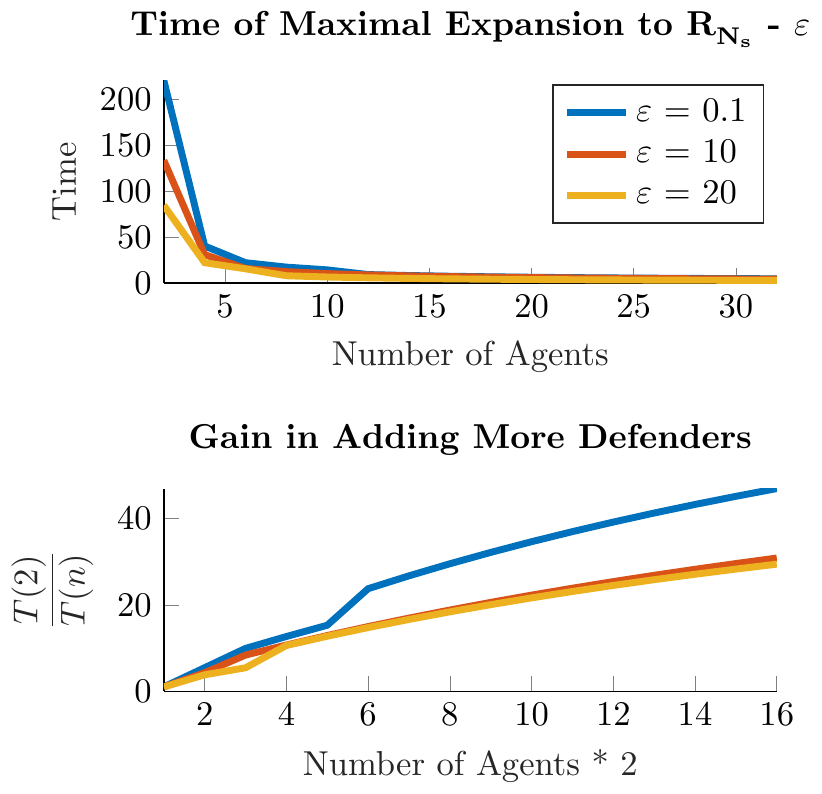} \caption{Time of maximal expansion of the protected region to $R_{N_s} - \varepsilon$ and gain in adding more defenders for equal defender speeds. We simulated the spiral defense pincer sweep processes for an even number of defenders, ranging from $2$ to $32$ defenders. The chosen values of the parameters are $r=10$, $V_T = 1$, $\Delta V = 10$ and $R_{max} = 150$.}
\label{Fig13Label}
\end{figure}

\begin{figure}[!htb]
\noindent \centering{}\includegraphics{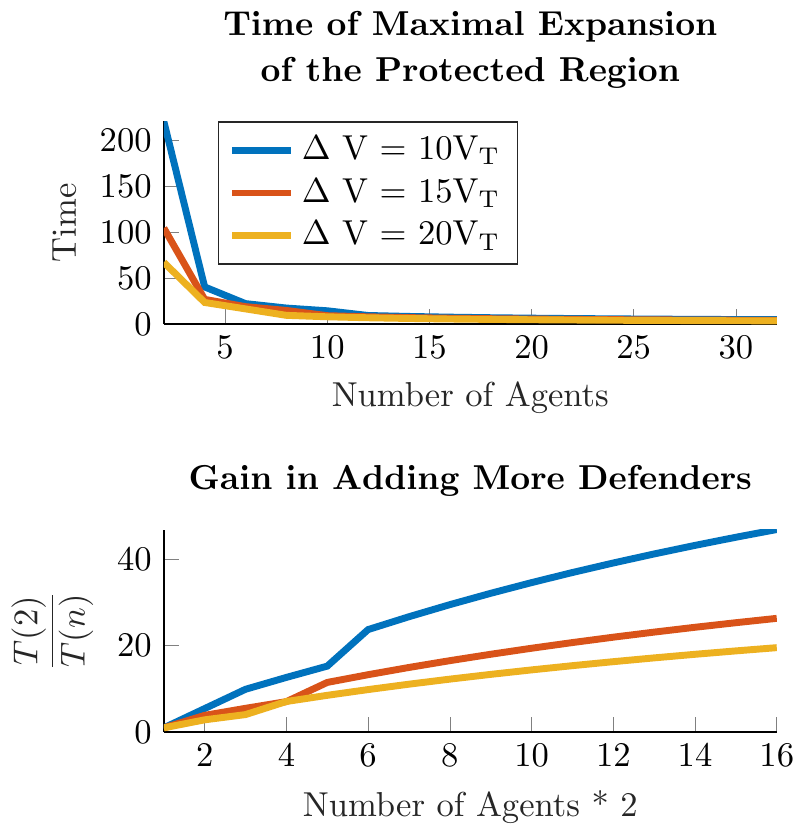} \caption{Time of maximal expansion of the protected region and gain in adding more defenders for different defender speeds. We simulated the spiral defense pincer sweep processes for an even number of defenders, ranging from $2$ to $32$ defenders. The chosen values of the parameters are $r=10$, $V_T = 1$ and $R_{max} = 150$.}
\label{Fig14Label}
\end{figure}

\section{Comparative Analysis Between Circular and Spiral Defense Pincer Sweep Strategies}
\label{sec:comparative_analysis}
The purpose of this section is to compare between the attained results for the circular and spiral defense pincer sweep processes using the relevant performance metrics. These metrics constitute the minimal defender speed required for successful defense of the initial protected region, the time to expand the protected region to the maximal defendable area and the maximal feasible protected region's radius resulting from the defense protocol. To accurately compare between the total search times of defender swarms that can perform both the circular and spiral defense pincer sweep processes, the number of defenders as well as the defenders' speed has to be equal in the compared circular and spiral defender swarms.

Defenders performing the circular defense pincer sweep process require a higher critical speed compared to  defenders performing the the spiral defense pincer sweep process. Therefore, Fig. $15$ presents the maximal protected region's radius that the defender team is able to expand the region into, when defenders employ the spiral defense pincer sweep process. The results are obtained for different speeds above the circular critical speed. The resulting maximal radius is clearly larger compared to the maximal protected region's radius that is achieved with a defender team that employs the circular defense pincer sweep process in Fig. $5$.

Fig. $16$ shows the spiral defense pincer sweep process's total search times obtained for different speeds above the circular critical speed of $2$ defenders. This implies that values of $ \Delta V$ shown in the plots correspond to defender speeds that equal nearly twice the spiral critical speeds. Requiring a higher critical speed means defender teams performing the circular defense pincer sweep process can expand the protected region to a smaller area compared to a defender team with the same capabilities performing the spiral protocol.

Fig. $17$ compares the search times until the maximal expansion of the protected region is obtained and the gain in adding more defenders for circular sweeping swarms and spiral sweeping swarms. The results are computed with the same defender speeds for both the circular and spiral defense pincer sweep processes. The reduction in total search time achieved when defenders perform the spiral defense pincer sweep process are clearly noticeable. This result holds regardless to the number of defenders that perform the defense protocols or to their speeds.

\begin{figure}[!htb]
\noindent \centering{}\includegraphics{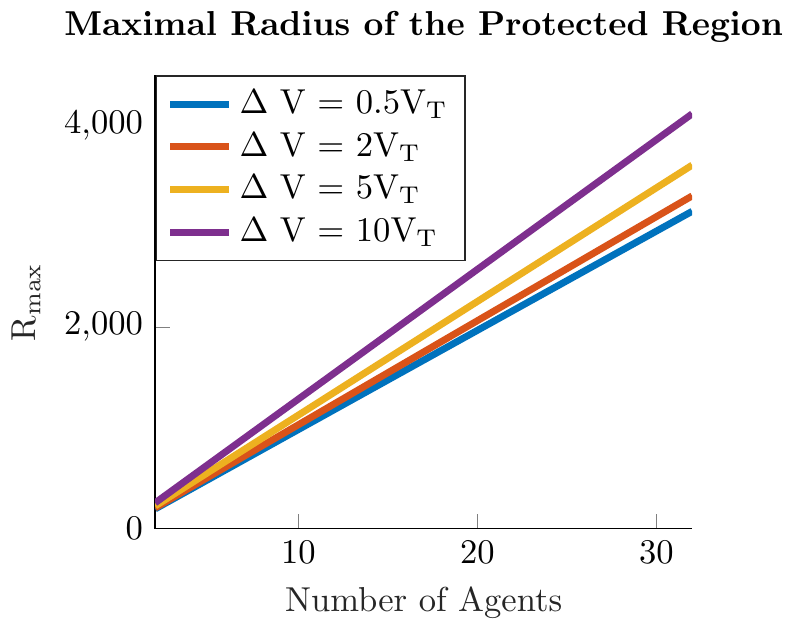} \caption{Maximal protected region's radius. We simulated the spiral sweep protocols with an even number of defenders, ranging from $2$ to $32$ defenders. We show results obtained for different values of speeds above the critical speed of $2$ defenders that employ the circular defense pincer sweep process. The chosen values of the parameters are $r=10$, $V_T = 1$ and $\varepsilon = 0.2$.}
\label{Fig15Label}
\end{figure}

\begin{figure}[!htb]
\noindent \centering{}\includegraphics{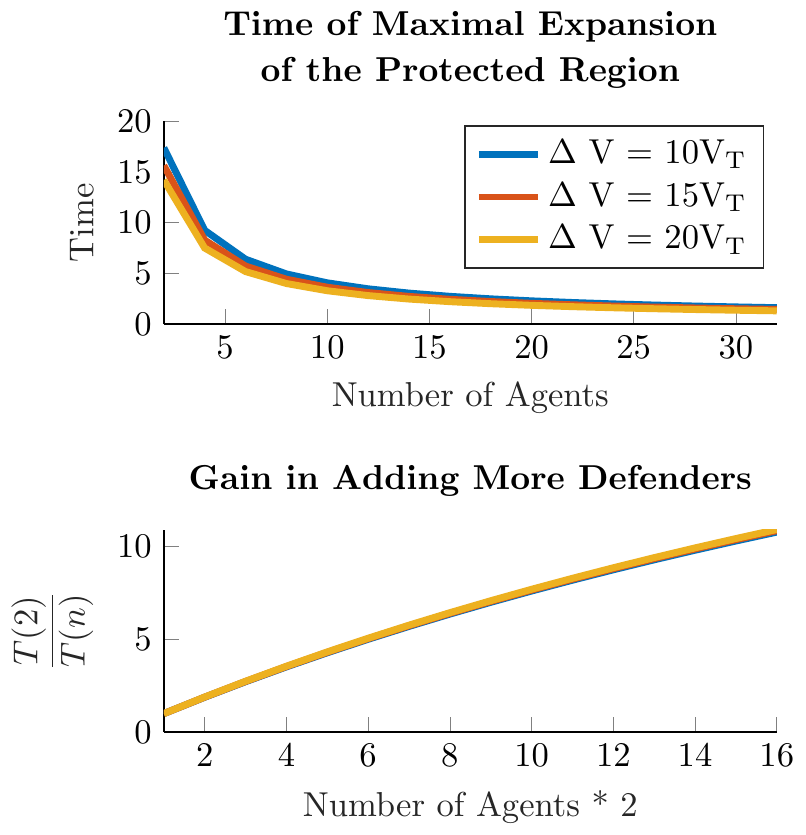} \caption{Time of maximal expansion of the protected region and gain in adding more defenders for different defender speeds above the critical speed of $2$ defenders employing the circular defense pincer sweep process. We simulated the spiral pincer sweep protocols for an even number of defenders, ranging from $2$ to $32$ defenders. The chosen values of the parameters are $r=10$, $V_T = 1$ and $R_{max} = 120$.}
\label{Fig16Label}
\end{figure}

\begin{figure}[!htb]
\noindent \centering{}\includegraphics[width=3.5in,height =3.5in]{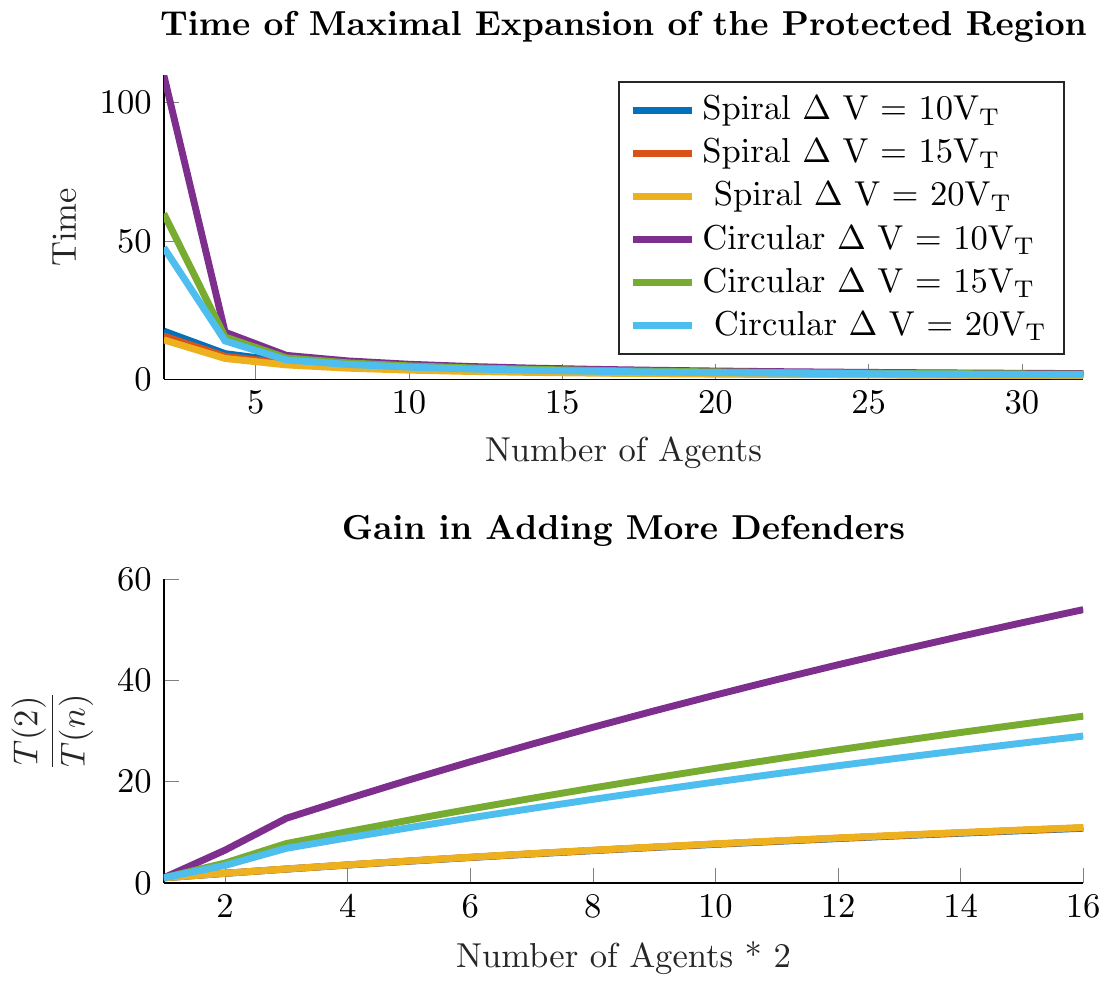} \caption{Time of maximal expansion of the protected region and gain in adding more defenders for the circular and spiral defense sweep protocols. We simulated sweep protocols with an even number of defenders, ranging from $2$ to $32$ defenders, that perform the circular and spiral defense pincer sweep protocols at speeds above the critical speed of $2$ defenders that perform the circular defense pincer sweep process. The chosen values of the parameters are $r=10$, $V_T = 1$ and $R_{max} = 120$.}
\label{Fig17Label}
\end{figure}

\section{Comparison to State-of-the-Art Same-Direction Defense Strategies}
\label{sec:comparison_to_same_direction_protocols}
The purpose of this section is to compare the developed circular and spiral pincer sweep guarding and expansion strategies to prevalent approaches for defense against smart invaders which are considered as the state-of-the-art in defense against smart invaders. Such approaches usually distribute the defending agents equally around the protected region and require that all defenders move in the same-direction. Such an approach is presented in \cite{mcgee2006guaranteed}, although the authors are interested only on solving the defense task and do not provide explicit expansion protocols that allow to achieve a maximal protected region or a detailed analytical analysis of sweep times. Hence, we develop two alternative same-direction defense protocols, circular and spiral, that enable the comparison of pincer-based and same-direction defense protocols against smart invaders.

We provide a quantitative comparison between the discussed $3$ metrics: critical speeds, maximal defendable area and the time required to reach maximal expansion. Circular and spiral defense pincer sweep protocols and circular and spiral defense same-direction sweep protocols are compared, proving the superiority of pincer-based approaches across all $3$ metrics. We prove that the corresponding pincer-based protocols yield lower critical speeds, shorter time to increase the protected region to its maximal size as well as the ability to expand the protected region to a larger area compared to same-direction protocols. 

These results are expected since defenders implementing same-direction protocols need to scan additional angular portions of the environment in each sweep around the region, to ensure no invader enters the protected region undetected. However, in pincer-based defense protocols, scanning such additional sectors is not required as a result of the complementary trajectories implemented by the defenders.

The critical speed necessary for defenders that perform the same-direction circular or spiral defense sweep protocols is higher compared to the minimal critical speed of their pincer-based counterparts. This can be observed in Fig. $18$. The same-direction defense protocols are developed by using similar considerations as the same-direction protocols in \cite{francos2021pincer}. These considerations lead to a same-direction circular protocol speed that equals,

\begin{equation}
{V_{{c_{circ\_same}}}} = \frac{{2\pi {R_0}{V_T}}}{{nr}} + {V_T}
\label{e150}
\end{equation}
The solution for the spiral same-direction defense protocol critical speed is solved numerically using the Newton–Raphson method from the equation below while using the spiral pincer-based critical speed as an initial guess.

\begin{equation}
\begin{array}{l}
F\left( {{V_s}} \right) = \frac{{2r{V_s}}}{{{V_s} + {V_T}}} \\ - \left( {{R_0} + r} \right)\left( {1 - {e^{ - \frac{{\left( {\frac{{2\pi }}{n} + \arcsin \left( {\frac{{2r{V_s}}}{{\left( {{V_s} + {V_T}} \right)\left( {{R_0} + 2r} \right)}}} \right)} \right){V_T}}}{{\sqrt {{V_s}^2 - {V_T}^2} }}}}} \right)
\end{array}
\label{e151}
\end{equation}
The angle $\beta_0$ denotes an additional angular sector that needs to be guarded in addition to the $\frac{2\pi}{n}$ angular sector that is swept when performing the spiral defense pincer protocols. $\beta_0$ is given by,

\begin{equation}
{\beta _0} = \arcsin \left( {\frac{{2r{V_s}}}{{\left( {{V_s} + {V_T}} \right)\left( {{R_0} + 2r} \right)}}} \right)
\label{e152}
\end{equation}

\begin{figure}[!htb]
\noindent \centering{}\includegraphics{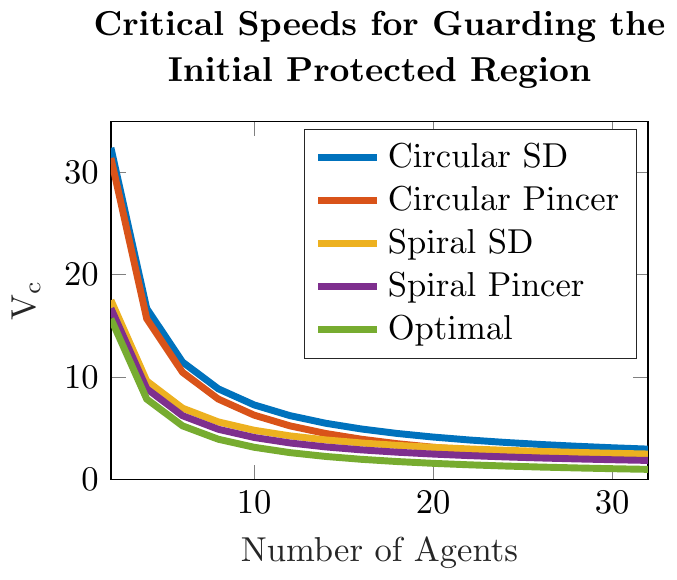} \caption{Critical speeds as a function of the number of defenders. The number of defenders is even, and ranges from $2$ to $32$ defenders, that perform the spiral and circular defense pincer sweep protocols as well as the spiral and circular defense same-direction sweep protocols. The optimal lower bound on the critical speeds is presented for comparison as well. The chosen values of the parameters are $r=10$, $V_T = 1$ and $R_0 = 100$.}
\label{Fig18Label}
\end{figure}

Because pincer-based defense sweep protocols require a lower critical speed compared to same-direction defense strategies, to fairly compare the performance of the different defense protocols, all defenders in each of the compared swarms is required to move at speeds above the critical speed of $2$ defenders that perform the same-direction circular defense protocol since it has the highest critical speed compared to the circular pincer, spiral pincer and spiral same-direction defense protocols.

The necessity to have a higher critical speed means that there are domains that can be successfully guarded using a team of defenders implementing pincer-based defense protocols but cannot be defended with a team of equal capabilities that performs same-direction defense protocols. Additionally, this means that defender teams performing pincer-based defense protocols are able to expand the protected region into a larger area compared to their same-direction alternative protocols.

Fig. $19$ shows the maximal defendable protected region's radius attained for each defense protocol. The results show that the spiral pincer-based approaches are best while circular same-direction defense protocols allow defenders to expand the protected region to the smallest area compared to the other expansion algorithms.

\begin{figure}[!htb]
\noindent \centering{}\includegraphics{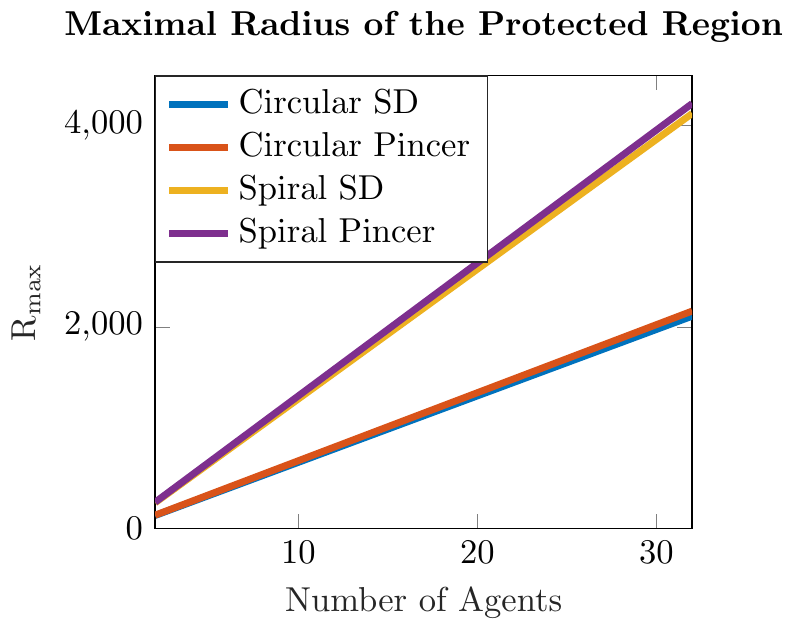} \caption{Maximal protected region's radius. We simulated the spiral and circular defense pincer sweep protocols as well as the spiral and circular defense same-direction sweep protocols with an even number of defenders, ranging from $2$ to $32$ defenders. We show results obtained for $\Delta V = 10 V_T$ above the critical speed of $2$ defenders that employ the circular defense same-direction sweep process. The chosen values of the parameters are $r=10$, $V_T = 10$ and $\varepsilon = 0.2$.}
\label{Fig19Label}
\end{figure}

Fig. $20$ shows the time until maximal expansion of the protected region to a radius of $R_{max} = 120$ for circular and spiral same-direction and circular and spiral pincer-based protocols. The value of $R_{max} = 120$ was chosen since a region with this radius can be successfully guarded by all $4$ protocols. All compared swarms have equal number of defenders and move at speeds that are $10 V_T$ above the critical speed of $2$ defenders that perform the circular defense same-direction sweep process. Results show that the spiral defense pincer sweep protocol results in the fastest expansion time of the protected region to a given radius.

\begin{figure}[!htb]
\noindent \centering{}\includegraphics{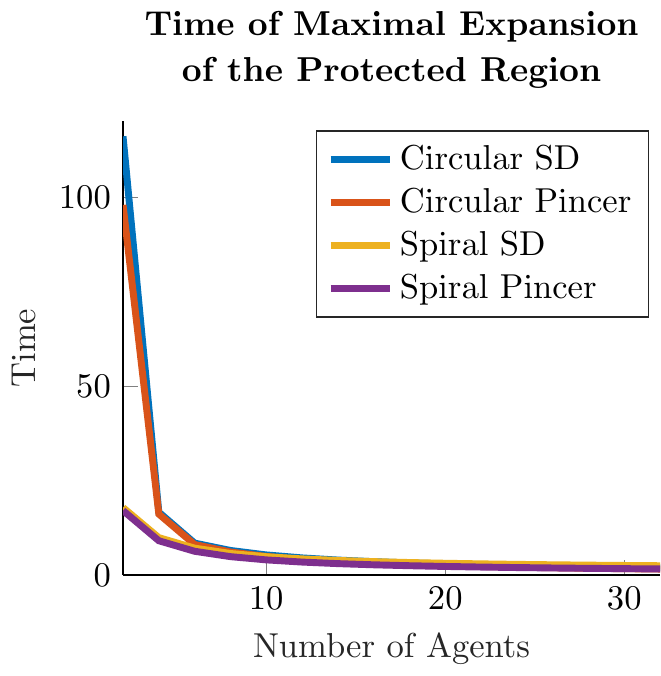} \caption{Time of maximal expansion of the protected region for the circular and spiral defense sweep protocols as well as the spiral and circular defense same-direction sweep protocols. We simulated sweep protocols with an even number of defenders, ranging from $2$ to $32$ defenders, that perform the defense sweep protocols at speeds of $10 V_T$ above the critical speed of $2$ defenders that perform the circular defense same-direction sweep process. The chosen values of the parameters are $r=10$, and $R_{max} = 120$.}
\label{Fig20Label}
\end{figure}

Fig. $21$ shows a zoomed-in plot of Fig. $20$ that displays the time until maximal expansion of the protected region, for swarms of defenders with $4$ to $22$ defenders. Results show that the spiral pincer defense protocols enables the defending team to expand the protected region to an area of a certain size more quickly compared to the circular same-direction, circular pincer and spiral same-direction defense protocols. Additionally, results show that for increasing number of defenders, circular pincer-based protocols lead to shorter sweep times even when compared to spiral same-direction defense protocols. This implies that despite the fact that pincer-based circular defense protocols may be implemented with defenders possessing more basic capabilities compared to defenders executing spiral strategies, the cooperation among defenders greatly improves the performance of the defender team.

\begin{figure}[!htb]
\noindent \centering{}\includegraphics{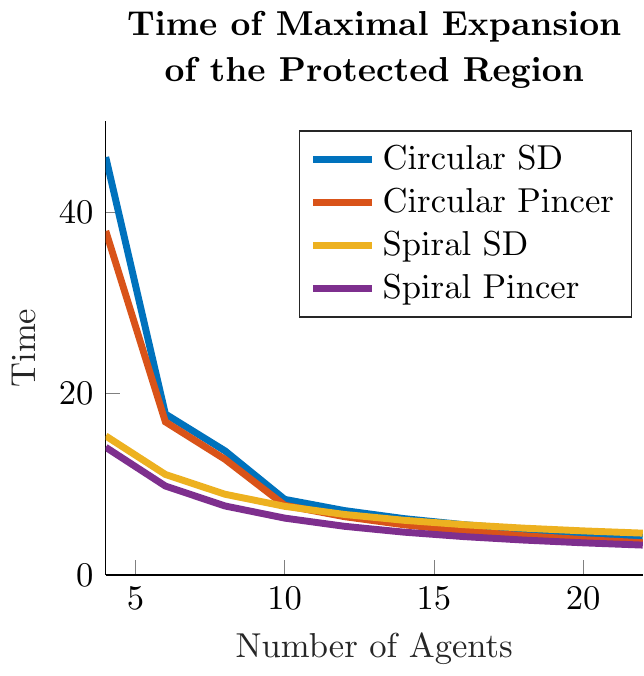} \caption{Zoom in on the time of maximal expansion of the protected region for the circular and spiral defense sweep protocols as well as the spiral and circular defense same-direction sweep protocols. We simulated sweep protocols with an even number of defenders, ranging from $4$ to $22$ defenders, that perform the defense sweep protocols at speeds of $10 V_T$ above the critical speed of $2$ defenders that perform the circular defense same-direction sweep process. The chosen values of the parameters are $r=10$, and $R_{max} = 120$.}
\label{Fig21Label}
\end{figure}

\section{Conclusions and Future Research Directions}
\label{sec:conclusions}
This research studies the problem of guaranteeing defense of an initial region against smart mobile invaders by a swarm of defending agents that act as visual sensors. Invaders are initially located outside a known circular environment which they try to enter without being detected by the defenders. Two novel algorithms that guarantee no intruder enters the region without being detected by a defender team that uses pincer movements between defending pairs are developed and compared to state-of-the-art approaches, proving the superiority of pincer-based defense protocols. Having a speed that exceeds the critical speed that allows defending the initial region, allows the defenders to gradually expand the protected region as well. Numerical and illustrative simulations using MATLAB and NetLogo demonstrate the performance of the proposed algorithms.

A possible extension to this work is to generalize the results for environments with more complex geometries, possibly in the presence of obstacles and apply the pincer expansion protocols in such settings. An additional interesting research avenue is to develop an algorithm that will be robust to failures of defenders and will allow to reorganize the defender team and enable it to continue the defense and expansion tasks with less defenders.



\section*{Appendix A}
\label{sec:Appendix_A}
This appendix provides an analytical computation of the total outward advancement times required in order for a defender team performing the circular defense pincer sweep process to expand the protected region to the maximal defendable area and completes the calculation from section \ref{sec:circular_defense}. Denote the total outward advancement time by $\widetilde{T}_{out}(n) = \sum\limits_{i = 0}^{{N_n} - 2} {{T_{{out}_i}}}$. Equation (\ref{e1100}) is expressed as,
\begin{equation}
\widetilde{T}_{out}(n)= \sum\limits_{i = 0}^{{N_n} - 2} {{T_{{out}_i}}}  = \frac{{\left( {{N_n} - 1} \right)r}}{{{V_s} + {V_T}}} - \frac{{2\pi {V_T}\sum\limits_{i = 0}^{{N_n} - 2} {{R_i}} }}{{n{V_s}\left( {{V_s} + {V_T}} \right)}}
\label{e101}
\end{equation}
The method to calculate $\sum\limits_{i = 0}^{{N_n} - 2} {R_i}$ is developed in Appendix $E$ of \cite{francos2021swarms} and equals,
\begin{equation}
\sum\limits_{i = 0}^{{N_n} - 2} {R_i}  = \frac{{R_0 - {c_2}R_{{N_n} - 2} + ({N_n} - 2){c_1}}}{{1 - {c_2}}}
\label{e102}
\end{equation}
The calculation of $R_{{N_n} - 2}$ is provided in Appendix $B$ of \cite{francos2021swarms}. It is given by,
\begin{equation}
R_{{N_n} - 2} = \frac{{{c_1}}}{{1 - {c_2}}} + {c_2}^{{N_n} - 2}\left( {R_0 - \frac{{{c_1}}}{{1 - {c_2}}}} \right)
\label{e103}
\end{equation}
Replacement of coefficients in (\ref{e103}) results in,
\begin{equation}
\begin{array}{l}
{R_{N_n - 2}} = \frac{{nr{V_s}}}{{2\pi {V_T}}} + {\left( {1 + \frac{{2\pi {V_T}}}{{n\left( {{V_s} + {V_T}} \right)}}} \right)^{{N_n} - 2}}\left( {\frac{{2\pi {R_0}{V_T} - nr{V_s}}}{{2\pi {V_T}}}} \right)
\end{array}
\label{e104}
\end{equation}
exchanging the coefficients in (\ref{e102}) leads to,
\begin{equation}
\begin{array}{l}
\sum\limits_{i = 0}^{{N_n} - 2} {{R_i}}  = \frac{{{R_0}n\left( {{V_s} + {V_T}} \right)}}{{2\pi {V_T}}} - \frac{{{n^2}{V_s}r\left( {{V_s} + {V_T}} \right)}}{{{{\left( {2\pi {V_T}} \right)}^2}}} + \frac{{nr{V_s}\left( {{N_n} - 1} \right)}}{{2\pi {V_T}}} \\ - {\left( {1 - \frac{{2\pi {V_T}}}{{n\left( {{V_s} + {V_T}} \right)}}} \right)^{{N_n} - 1}}\left( {\frac{{n\left( {2\pi {R_0}{V_T} - n{V_s}r} \right)\left( {{V_s} + {V_T}} \right)}}{{{{\left( {2\pi {V_T}} \right)}^2}}}} \right)
\end{array}
\label{e105}
\end{equation}
Replacing the expression for ${\sum\limits_{i = 0}^{{N_n} - 2} {{R_i}} }$ from (\ref{e105}) to equation (\ref{e101}) yields,
\begin{equation}
\begin{array}{l}
\widetilde{T}_{out}(n) = \sum\limits_{i = 0}^{{N_n} - 2} {{T_{ou{t_i}}}} = \\ - \frac{{{R_0}}}{{{V_s}}} + \frac{{nr}}{{2\pi {V_T}}} + {\left( {1 - \frac{{2\pi {V_T}}}{{n\left( {{V_s} + {V_T}} \right)}}} \right)^{{N_n} - 1}}\left( {\frac{{2\pi {R_0}{V_T} - n{V_s}r}}{{2\pi {V_T}{V_s}}}} \right)
 \end{array}
\label{e620}
\end{equation}

\section*{Appendix B}
\label{sec:Appendix_B}
This appendix provides an analytical computation of the total outward advancement times required in order for a defender team performing the spiral defense pincer sweep process to expand the protected region to the maximal defendable area and completes the calculation from section \ref{sec:spiral_defense}. Denote the total outward advancement time by $\widetilde{T}_{out}(n) = \sum\limits_{i = 0}^{{N_n} - 2} {{T_{{out}_i}}}$. Equation (\ref{e702}) is expressed as,
\begin{equation}
\begin{array}{l}
\widetilde{T}_{out}(n) = \sum\limits_{i = 0}^{{N_n} - 2} {{T_{ou{t_i}}} = } \frac{{2r\left( {{N_n} - 1} \right)}}{{{V_s} + {V_T}}} - \frac{{\left( {1 - \lambda } \right)\left( {{V_s} + 1} \right)\sum\limits_{i = 0}^{{N_n} - 2} {{{\tilde R}_i}} }}{{{V_s}\left( {{V_s} + {V_T}} \right)}}
\end{array}
\label{e1087}
\end{equation}
The method to calculate $\sum\limits_{i = 0}^{{N_n} - 2} {R_i}$ is developed in Appendix $E$ of \cite{francos2021swarms} and equals,
\begin{equation}
\sum\limits_{i = 0}^{{N_n} - 2} {\widetilde{R}_i}  = \frac{{\widetilde{R}_0 - {c_2}\widetilde{R}_{{N_n} - 2} + ({N_n} - 2){c_1}}}{{1 - {c_2}}}
\label{e726}
\end{equation}
The calculation of $R_{{N_n} - 2}$ is provided in Appendix $B$ of \cite{francos2021swarms}. It is given by,
\begin{equation}
\widetilde{R}_{{N_n} - 2} = \frac{{{c_1}}}{{1 - {c_2}}} + {c_2}^{{N_n} - 2}\left( {\widetilde{R}_0 - \frac{{{c_1}}}{{1 - {c_2}}}} \right)
\label{e725}
\end{equation}
Replacement of coefficients in (\ref{e725}) results in,
\begin{equation}
\begin{array}{l}
{{\tilde R}_{{N_n} - 2}} = \frac{{2r{V_s}}}{{\left( {1 - \lambda } \right)\left( {{V_s} + 1} \right)}} + \\ {\left( {\frac{{{V_T} + {V_s}\lambda  - 1 + \lambda }}{{{V_s} + {V_T}}}} \right)^{{N_n} - 2}}\frac{{\left( {{R_0} + r} \right)\left( {1 - \lambda } \right)\left( {{V_s} + 1} \right) - 2r{V_s}}}{{\left( {1 - \lambda } \right)\left( {{V_s} + 1} \right)}}
 \end{array}
\label{e503}
\end{equation}
exchanging the coefficients in (\ref{e726}) leads to,
\begin{equation}
\begin{array}{l}
\sum\limits_{i = 0}^{{N_n} - 2} {{{\tilde R}_i}}  = {{\tilde R}_0}\frac{{{V_s} + {V_T}}}{{{V_s}\left( {1 - \lambda } \right)}} - \frac{{2r\left( {{V_T} + {V_s}\lambda } \right)}}{{{V_s}{{\left( {1 - \lambda } \right)}^2}}} - \\ \frac{{{V_s} + {V_T}}}{{{V_s}\left( {1 - \lambda } \right)}}{\left( {\frac{{{V_T} + {V_s}\lambda }}{{{V_s} + {V_T}}}} \right)^{{N_n} - 1}}\left( {{{\tilde R}_0} - \frac{{2r}}{{1 - \lambda }}} \right) + \frac{{2r({N_n} - 2)}}{{1 - \lambda }}
\end{array}
\label{e504}
\end{equation}
Replacing the expression for $\sum\limits_{i = 0}^{{N_n} - 2} {\widetilde{R}_i} $ from (\ref{e504}) to equation (\ref{e1087}) yields,
\begin{equation}
\begin{array}{l}
{\widetilde{T}_{out}}(n) =  \frac{{2r}}{{{V_s} + {V_T}}} - \frac{{{{\tilde R}_0}}}{{{V_s}}} + \frac{{2r\left( {{V_T} + {V_s}\lambda  - 1 + \lambda } \right)}}{{\left( {1 - \lambda } \right)\left( {{V_s} + 1} \right)\left( {{V_s} + {V_T}} \right)}} + \\ {\left( {\frac{{{V_T} + {V_s}\lambda  - 1 + \lambda }}{{{V_s} + {V_T}}}} \right)^{{N_n} - 1}}\frac{{\left( {{R_0} + r} \right)\left( {1 - \lambda } \right)\left( {{V_s} + 1} \right) - 2r{V_s}}}{{{V_s}\left( {1 - \lambda } \right)\left( {{V_s} + 1} \right)}}
\end{array}
\label{e618}
\end{equation}

\bibliographystyle{IEEEtran}
\bibliography{Search_For_Smart_Evaders_TRO}
\vfill

\end{document}